\documentclass[12pt,letterpaper]{article}
\usepackage{amsmath,amssymb,amsfonts,amsthm,dsfont,mathtools,lipsum,etoolbox,mathabx}
\usepackage{booktabs}
\usepackage[flushleft]{threeparttable}
\usepackage{multirow}
\usepackage{setspace}
\usepackage{color}
\usepackage{endnotes}
\usepackage{enumerate}
\usepackage{float}
\usepackage{makecell}
\usepackage{afterpage}
\usepackage[bottom]{footmisc}
\usepackage[pdftex]{graphicx} 
\usepackage{epstopdf}
\usepackage{siunitx}
\usepackage{tikz}
\usepackage{indentfirst}
\usepackage{latexsym}
\usepackage{lscape}
\usepackage{parskip}
\usepackage{rotating}
\usepackage{caption}
\usepackage[normalem]{ulem}
\usepackage{pgfplots}
\usepackage{bm}
\usepackage{verbatim}
\usepackage{mathdots}
\usepackage{soul}
\usepackage{mwe}
\usepackage{arydshln}
\usepackage{xcolor}
\usepackage{subcaption}
\usepackage[scaled=0.85]{beramono}
\usepackage{titlesec}
\usepackage{placeins}
\usepackage[colorlinks=true,allcolors=blue]{hyperref}
\usepackage{cleveref}
\usepackage{chngcntr}
\usepackage{scalerel,stackengine}
\usepackage{anyfontsize}
\usepackage[left=1.0in,right=1.0in,top=1.0in,bottom=1.0in]{geometry}
\usepackage[shrink=0,letterspace=0]{microtype}
\setlength{\emergencystretch}{2em}
\linespread{1.5}

\counterwithin{figure}{section}
\counterwithin{table}{section} 


\usepackage[authoryear, round,semicolon,sort&compress]{natbib}
\setlength{\bibsep}{0pt}

\usepackage{mathptmx}
\usepackage[T1]{fontenc}



\bibliographystyle{agsm}

\newtheorem{asn}{Assumption}[section]

\newtheorem{prop}{Proposition}[section]

\definecolor{mygreen}{rgb}{0.2, 0.7, 0.1}
\definecolor{lightblue}{rgb}{0.7,0.95,1}   
\definecolor{lightorange}{rgb}{1,0.9,0.7}  

\setcounter{MaxMatrixCols}{10}

\makeatletter
\newcommand{\leqnomode}{\tagsleft@true}
\newcommand{\reqnomode}{\tagsleft@false}
\makeatother

\newcommand\redsout{\bgroup\markoverwith{\textcolor{red}{\rule[0.5ex]{1pt}{1.4pt}}}\ULon}
\newcommand\bluesout{\bgroup\markoverwith{\textcolor{blue}{\rule[0.5ex]{1pt}{1.4pt}}}\ULon}

\newcommand{\supp}{\textup{Support}}

\newcommand{\indep}{\mathrel{\perp\!\!\!\perp}}
\newcommand{\Supp}{\text{Support}}
\newcommand{\eps}{\varepsilon}

\makeatletter
\newcommand{\VBig}{\bBigg@{2.7}}
\newcommand{\Vast}{\bBigg@{3.6}}
\newcommand{\VVast}{\bBigg@{5.5}}

\stackMath
\newcommand\reallywidehat[1]{%
	\savestack{\tmpbox}{\stretchto{%
			\scaleto{%
				\scalerel*[\widthof{\ensuremath{#1}}]{\kern-.6pt\bigwedge\kern-.6pt}%
				{\rule[-\textheight/2]{1ex}{\textheight}}
			}{\textheight}%
		}{0.5ex}}%
	\stackon[1pt]{#1}{\tmpbox}%
}

\makeatother 

\allowdisplaybreaks[4]

\setlength{\parindent}{0.25in}
\setlength{\parskip}{0.00in}

\newcommand{\zerodisplayskips}{%
	\setlength{\abovedisplayskip}{0.15cm}
	\setlength{\belowdisplayskip}{0.15cm}
	\setlength{\abovedisplayshortskip}{0.15cm}
	\setlength{\belowdisplayshortskip}{0.15cm}}
\appto{\normalsize}{\zerodisplayskips}
\appto{\small}{\zerodisplayskips}
\appto{\footnotesize}{\zerodisplayskips}

\captionsetup{width=\textwidth,justification=raggedright}
\captionsetup[figure]{font=footnotesize,aboveskip=2pt,belowskip=2pt}
\captionsetup[subfigure]{font=scriptsize}
\captionsetup[table]{font=footnotesize,skip=2pt}

%

\setlength{\textfloatsep}{12pt plus 2.0pt minus 4.0pt}
\setlength{\intextsep}{12pt plus 2.0pt minus 2.0pt}

\titleformat{\section}{\fontsize{14}{15.6}\bfseries\selectfont}{\thesection}{1em}{}
\titleformat{\subsection}{\fontsize{12.5}{14.4}\bfseries\selectfont}{\thesubsection}{1em}{}
\titleformat{\subsubsection}{\fontsize{12.5}{14.4}\bfseries\itshape\selectfont}{\thesubsubsection}{1em}{}

\titlespacing\section{0pt}{8pt}{3pt}
\titlespacing\subsection{0pt}{6pt}{2pt}
\titlespacing\subsubsection{0pt}{4pt}{2pt}

\begin{document}	
	
  	  	\title{Assignment at the Frontier: Identifying the Frontier Structural Function and Bounding Mean Deviations\thanks{
  	  	We thank Xavier D'haultfoeuille, Jinyong Hahn, 
  	  	conference participants at AEFI 2025, and seminar participants at Ben-Gurion University, Hebrew University of Jerusalem,  Penn State, Oxford, and Aarhus 
  	  	for helpful comments and discussions.}}
    	%
			
\author{Dan Ben-Moshe\thanks{Department of Economics, Ben-Gurion University of the Negev. Email: dbmster@gmail.com}  
	\and David Genesove\thanks{Department of Economics, The Hebrew University of Jerusalem. Email: david.genesove@mail.huji.ac.il}}
	
\date{\today}
  
   \maketitle
  \begin{abstract}	
  	
This paper analyzes a model in which an outcome equals a frontier function of inputs minus a nonnegative unobserved deviation. The inputs may be endogenous (statistically dependent on the deviation). If zero lies in the support of the deviation given the inputs---an assumption we term assignment at the frontier---then the frontier is identified by the supremum of the outcome given those inputs, obviating the need for instruments. We then consider estimation with random error that is mean-independent of the inputs. Motivated by the assignment at the frontier assumption, we regularize estimation by requiring the fitted distribution of the deviation to maintain a minimum probability mass in a neighborhood of zero. Finally, we derive a lower bound on mean deviation, using only variance and skewness, that is robust to scarcity of data near the frontier. We apply our methods to estimate a frontier production function and mean inefficiency. 

\vspace{0.4cm}

\noindent \textbf{Keywords:} Assignment at the frontier, frontier structural function, deviations, stochastic frontier analysis, production function, identification, bound, regularization. 


  \end{abstract}

\newpage

\section{Introduction}


This paper analyzes a model in which an outcome equals a function of inputs minus a nonnegative unobserved deviation. The function represents a frontier structural function (FSF) relating inputs to the outcome; the nonnegative unobservable is the deviation from the FSF.\footnote{The FSF is analogous to the average structural function (ASF) defined by \cite{blundell2003endogeneity}.} The assumption that deviations are nonnegative is motivated by applications where they represent inefficiencies or other wedges, such as regulations, markups, or taxes. The inputs may be endogenous (statistically dependent on the deviation). If zero lies in the support of the deviation given inputs---an assumption we term \textit{assignment at the frontier}---then the FSF is identified by the supremum of the outcome at those inputs, even when inputs are correlated with the deviation. Like random assignment, assignment at the frontier obviates the need for instrumental variables; unlike random assignment, inputs need not be exogenous.

For example, consider a production function where output equals a linear function of labor minus (nonnegative) inefficiency. When firms choose labor optimally based on productivity, labor is endogenous, and hence correlated with inefficiency. As a result, the conditional mean of output given labor does not identify the marginal productivity of labor, because as labor varies, changes in mean output conflate the effect of labor with changes in average inefficiency (so ordinary least squares is biased). However, when inefficiency enters separably,\footnote{Section~\ref{se:ASF} discusses when the frontier identifies the production and mean production functions under additive separability and input rescaling, and when input-inefficiency interactions prevent identification.} the conditional supremum of output given labor does identify the marginal productivity: as labor varies, changes in frontier output are due only to the effect of labor, since frontier firms have zero inefficiency given labor. The source of identification differs. Identification of the mean production function typically comes from mean independence of productivity and a set of exogenous variables. In contrast, identification of the frontier production function comes from the presence of firms arbitrarily close to efficiency, given inputs.

The identifying assumption that zero lies in the support of the deviation given inputs is equivalent to the infimum of the deviation equaling zero. Assignment at the frontier imposes no further restriction on the joint distribution of the deviation and inputs. In particular, the assumption allows the density of the deviation to vary with inputs at all points of its support. Figure \ref{fig:distu} illustrates this, showing how the FSF corresponds to the maximum of the outcome’s support at each input value (that is, the upper envelope of the outcome distribution), and that the distribution of the deviation varies with the input. Thus, inputs need not be exogenous, and neither the mean nor any quantile is necessarily a constant downward shift from the FSF.

\begin{figure}[!t]
	\centering
	
	\def\leftskewdist#1#2#3{
		\draw[domain=0:3, smooth, variable=\y, black] 
		plot ({#1 + #3 * ((3-\y)^1.5) * exp(-(3-\y)^1.5)}, {#2 + \y});
	}
	
	\def\rightskewdist#1#2#3{
		\draw[domain=-0.1:3.5, smooth, variable=\y, black] 
		plot ({#1 + #3 * (\y^1.5) * exp(-\y^1.5)}, {#2 + \y});
	}
	
	\def\normaldist#1#2#3{
		\draw[domain=0:3.9, smooth, variable=\y, black] 
		plot ({#1 + #3 * exp(-0.5 * (\y - 1.95)^2)}, {#2 + \y});
	}
	
	\definecolor{myblue}{RGB}{55, 126, 184}
	\definecolor{myorange}{RGB}{255, 127, 0}
	
	\resizebox{0.78\textwidth}{!}{
	\begin{tikzpicture}[
		xscale=1.1,  
		yscale=0.9,  
		font=\footnotesize
		]
		
		\draw[->] (0,1.5) -- (8,1.5);
		\node[anchor=north] at (4,1.495) {Input ($x$)};
		
		\draw[->] (0,1.5) -- (0,7);
		\node[rotate=90, anchor=south, xshift=-2pt] at (0,4.25) {Outcome ($y$)} ;
		
		\draw[thick, black, domain=0:7, smooth] plot ({\x}, {-0.05*(\x - 6)^2 + 6.5});
		\node at (4.0, 7.2) {\footnotesize FSF};
		\draw[->, thick] (4.1, 7) -- (3.6, 6.3);
		
		\draw[thick, dashed, myblue, domain=0:6.82, smooth] plot ({\x}, {0.3*sin(\x r) + 4.0});
		\node[myblue] at (5.2, 4.7)  {\footnotesize Mean};
		\draw[->, thick, myblue] (5.2, 4.5) -- (4.9, 3.8);
		
		\draw[thick, dotted, myorange, domain=0:6.25, smooth] plot ({\x}, {0.1*sin(\x r) + 3.0});
		\node[myorange] at (5.1, 2.0) {\footnotesize Example quantile};
		\draw[->, thick, myorange] (5.1, 2.3) -- (4.9, 2.8);
		
		\leftskewdist{1.8}{2.6}{1.2};  
		\rightskewdist{3.9}{2.8}{1.5};  
		\normaldist{6}{2.6}{0.9};  
		
		\node at (9.2, 5.6) [align=center] {\footnotesize Density of FSF \\ \footnotesize minus deviation};
		\draw[->, thick] (8.0, 5.6) -- (7.0, 5.0);
		
	\end{tikzpicture}
}
	 \caption{	 The FSF is the maximum of the outcome’s support at each input. 
	 	The distribution of the deviation varies with inputs.}
	\label{fig:distu}
\end{figure}

As Figure \ref{fig:distu} suggests, the FSF is identified by the supremum of the outcome given inputs, making a nonparametric Data Envelopment Analysis (DEA) frontier a natural estimator \citep{charnes1978measuring,farrell1957measurement}. DEA, however, ignores random errors. In our setting, we observe inputs and outcomes; we do not observe any auxiliary indicator or proxy for proximity to the frontier. As a result, a sample supremum of the outcome given inputs may reflect a large positive error rather than a near-frontier observation. We therefore model outcomes using stochastic frontier analysis (SFA) to separate deviations from random error. Unlike much of the SFA literature, we allow the deviation to be statistically dependent on inputs. Without such dependence, the frontier reduces to a constant upward shift of the conditional mean, implicitly imposing exogeneity. The random error is assumed mean-independent of inputs.

In a cross-sectional setting, identification strategies rely on support restrictions (deviations on the nonnegative reals and errors on the reals) and often impose symmetry on the random error. Recent work shows that when both the frontier and deviations depend on the same observables, finite-sample estimation can have difficulty separating the two \citep{parmeter2024inference}. We therefore focus on a panel setting with time-invariant deviations, which exploits within- and between-firm variation for identification rather than symmetry or support restrictions on the random error.

We implement this strategy using a method of moments estimator \citep{olson1980monte} adapted to the panel setting. First, the outcome is nonparametrically regressed on inputs to estimate the conditional mean outcome. 
	 This regression includes any endogeneity bias arising from  dependence between inputs and the deviation. Second, we use within–between decompositions to form bias-corrected powers of the residuals and, following approaches for variance estimation under heteroskedasticity \citep{hall1989variance, fan1998efficient}, flexibly regress these bias-corrected powers of the residuals on inputs to estimate conditional central moments of the deviation. Third, at each input value we specify a parametric conditional distribution for the deviation and estimate its parameters by method of moments, subject to a near-frontier mass constraint (a regularization that enforces a minimum probability mass near the frontier). Finally, we compute the conditional mean deviation implied by these estimated parameters and add it to the estimated conditional mean outcome to obtain an estimate of the frontier at a given input value. This estimation strategy allows the conditional distribution of the deviation to vary flexibly with inputs.\footnote{For large-sample properties and inference of related estimators, see \citet{simar2017nonparametric} and \citet{parmeter2024inference}.} 

The regularization imposed by the near-frontier mass constraint addresses the fact that estimated central moments may be only weakly informative about near-frontier probability mass: different parameter values can fit the estimated central moments similarly while implying different mean deviations. Motivated by the assignment at the frontier assumption, we impose a near-frontier mass constraint requiring the fitted distribution to place at least a minimum probability mass in a neighborhood of zero, with the threshold decreasing with the local effective sample size. This regularization may introduce bias when the true near-frontier probability mass falls short of the imposed lower bound, but it reduces sensitivity of the implied mean deviation to estimation noise in the central moments. In the simulations, the unconstrained estimator can have explosive mean squared error (MSE). In the empirical application, it produces implausibly large mean inefficiency estimates.

Often, research interest lies not in the frontier itself, but in deviations from it: thus, firm inefficiency in the regulated utility sector \citep{knittel2002alternative}; regulatory tax measurement in the housing market \citep{BenMosheGenesoveRegulation,glaeser2005manhattan}; cross-country differences in factor use as indicators of misallocation \citep{hsieh2009misallocation}; and firm markups \citep{loecker2012markups,de2020rise,hall1988relation}. In DEA, empirical work usually reports efficiency scores and peer benchmarks rather than characterizing the frontier \citep{banker1984some,fare1994productivity}; in SFA, the primary objective is usually to measure firm-level inefficiency, with the frontier estimated to separate it from random error \citep{jondrow1982estimation,battese1995model}. When the assignment at the frontier assumption holds and the frontier is estimable, deviations can be recovered from differences between the estimated frontier and observed outcomes.

However, assignment at the frontier need not hold. For example, if firms that choose inputs suboptimally are also always inefficient in applying those inputs, then the upper envelope of observed outputs will lie strictly below the frontier production function, touching it only at cost-minimizing input vectors chosen by efficient firms.\footnote{For strictly quasi-concave and monotone production functions, cost minimization under unconstrained input choice and given prices implies a unique input vector. Thus, identification of the frontier production function over some region of inputs requires: (a) variation in input prices, (b) adjustment costs for inputs, or (c) firms that are efficient in applying inputs making suboptimal input choices. These are the same sources for input variation as in mean regression.} More generally, firms that choose inputs suboptimally but apply those inputs efficiently may exist, but may be rare at any given input vector. Near-frontier observations may therefore be scarce, especially at input vectors far from those of cost-minimizing efficient firms.

Scarcity of near-frontier data motivates our focus on bounding the mean deviation. We derive a lower bound on the mean deviation using only variance and skewness. Because it depends only on the second and third central moments, it can remain informative even when there are no observations near the frontier. This contrasts with frontier-based methods, which must rely on extrapolation based on distributional assumptions. Regularization is attractive when a parametric specification is a reasonable approximation and the near-frontier mass constraint is plausible. The lower bound is a robust complement: it imposes no parametric distributional assumptions and remains valid even when scarcity causes the near-frontier mass constraint to fail.

We apply our results to estimate a production function, which is usually estimated using conditional mean regression. To address the main challenge of endogeneity, the now standard approach uses a control function with a proxy variable to capture unobserved productivity \citep{olleypakes,levinsohn2003estimating}. In contrast, our frontier approach identifies the frontier production function by exploiting two assumptions: inefficiency is nonnegative and zero lies in its support given inputs. This identification strategy requires neither proxy variables nor instrumental variables. 
Our approach allows the distribution of inefficiency to be statistically dependent on inputs, so inputs need not be exogenous. Allowing this dependence is central to our approach: if inefficiency were mean-independent of inputs, the frontier would differ from the conditional mean only by a constant shift.

Our analysis builds on the SFA literature \citep[see, e.g., the review by][]{kumbhakar2022stochastic}, which originally assumed a linear frontier production function and mutually independent inefficiency, random error, and inputs \citep{aigner1977formulation,meeusen1977efficiency}. Since then, SFA models have incorporated more flexible frontier functions \citep[e.g.,][]{fan1996semiparametric} and distributions \citep[e.g.,][]{reifschneider1991systematic,parmeter2019combining}.

Given the centrality of endogeneity in econometrics, growing attention has focused on it in frontier estimation \citep[e.g.,][]{amsler2016endogeneity,karakaplan2017endogeneity,prokhorov2021estimation,centorrino2023maximum}. A central contribution of our paper is to show that common solutions, such as instrumental variables, are often superfluous. In fact, the frontier is identified by nonnegative inefficiency and the assignment at the frontier assumption, provided that the distribution of inefficiency is allowed to vary with inputs. Some SFA models permit dependence between inefficiency and inputs, often to model the determinants of inefficiency. However, to our knowledge, the literature has not recognized that allowing for this dependence is what accounts for endogeneity in frontier estimation.

A potential outcomes framework can elucidate why endogeneity is not a concern under assignment at the frontier and how this contrasts with random assignment. Let $\tau \in [0,1]$ index unobserved firm types, with potential outcomes $Y_0(\tau)$ under control and $Y_1(\tau)$ under treatment, each strictly decreasing in $\tau$ and with finite suprema at the frontier type $\tau=0$. The observed outcome is $Y = D \cdot Y_1(\tau) + (1-D) \cdot Y_0(\tau)$, where $D$ indicates treatment. Under random assignment, $\tau$ and $D$ are independent, and the average treatment effect is identified by the difference in conditional means, $E[Y \mid D = 1] - E[Y \mid D = 0]$.  Alternatively, under assignment at the frontier, $\tau=0$  is in the support under both treatment and control, and the frontier treatment effect is identified by the difference in conditional suprema, $\sup(Y \mid D = 1) - \sup(Y \mid D = 0)$.  Random assignment requires the distribution of types to be the same across treatment and control. Assignment at the frontier requires that the potential outcomes have finite suprema and that every neighborhood of the frontier type has strictly positive probability under both groups. The average treatment effect and the frontier treatment effect may differ substantially, and the relevant question determines which is of primary interest.\footnote{For policies aimed at the general population, the average effect is often the relevant object. For policies concerned with best practice or maximum potential, where a natural frontier exists, the frontier effect may be more relevant.}


Our approach can be compared to the identification at infinity literature, which exploits tail behavior for identification---using large values of covariates or instruments \citep{lewbel2007endogenous,chamberlain1986asymptotic} or large values of the outcome itself \citep{dhaultfoeuille2013another}. That work assumes the identified structural relationship in these extremes extends to average observations through a stable functional form and is usually interested in average effects. By contrast, our interest is in the frontier (e.g., the frontier production function) and the economically meaningful deviations from it (e.g., firm inefficiency). Our identification does not rely on extreme inputs or extreme unconditional outcomes; instead, it relies on extreme outcomes conditional on inputs.

Monte Carlo simulations assess finite-sample performance of our moment-based estimator of mean deviation and of the skewness-based lower bound. Two issues arise in small samples: the estimated lower bound can exceed the estimated mean, and the estimated mean can be unstable even when the lower bound remains accurate. The near-frontier mass constraint substantially reduces MSE. It rarely binds when the underlying distribution has high near-frontier mass, indicating it does not mechanically override moment information.

For illustrative purposes, we apply the methods to plant-level data from Colombian manufacturing in the food products industry.\footnote{These data have been widely used in production applications \citep[e.g.,][]{eslava2004effects,fernandes2007tradepolicy}. We use the version provided in the \texttt{gnrprod} \texttt{R} package for \citet{gandhi2020identification}.} Estimated mean inefficiency is correlated with inputs,  whereas standard OLS and SFA rule out such dependence by assumption. In the absence of regularization, unconstrained central moment matching can produce fitted inefficiency distributions with little mass near the frontier and implausibly large mean inefficiency estimates. The regularized estimator yields results that capture the overall shape of the empirical distribution while satisfying the near-frontier mass constraint.

The remainder of the paper is organized as follows. Section~\ref{se:model} presents the structural model and defines the FSF and mean deviation.
Section~\ref{se:id} introduces the assignment at the frontier assumption and identifies the FSF.
Section~\ref{se:ASF} provides conditions under which the FSF recovers the ASF and structural function.
Section~\ref{se:modelerror} introduces random errors.
Section~\ref{se:bound} derives lower bounds for mean deviation using variance, skewness, and higher central moments via the Stieltjes moment problem.
Section~\ref{se:est} presents semiparametric method of moments estimation with regularization using the near-frontier mass constraint in a panel data setting with time-invariant deviation.
Section~\ref{se:sim} presents simulations.
Section~\ref{se:ap} presents the empirical application to production.
Section~\ref{se:con} concludes.

\section{The Model} \label{se:model}

Consider the structural model
\begin{align}
	y &= \widetilde g(x,\omega),  \label{eq:modelnonsep1}  
\end{align}
where $y$ is an observed scalar outcome, $x$ is a vector of observed inputs that may be arbitrarily dependent on the vector of unobservables $\omega$, and $\widetilde g(\cdot)$ is an unknown structural function.
 
Define the FSF as the supremum of the structural function given $x$, and assume this supremum is finite:
\begin{align}
	g(x) := \sup_{\omega'} \widetilde g(x,\omega') < \infty .\label{eq:FSF}
\end{align}
The supremum need not be attained. When it is, it may be achieved by multiple $\omega$-types for a given $x$, and this set of optimal types can vary across $x$. For example, in a production setting with $\omega$ representing managerial efficiency, different managers could be equally optimal for a given set of inputs, and the most effective manager type could change with the firm's input mix. The substantive assumption, however, is that the supremum itself is finite. For example, a firm’s maximum possible output must be constrained by some technological barrier, given current know-how.

Rewrite the structural model \eqref{eq:modelnonsep1} as the frontier model
\begin{align}
	y &= g(x) - u, \label{eq:model1} \\
	u &\geq 0, \label{eq:model2}
\end{align}
where $u := g(x)  - \widetilde g(x,\omega) \geq 0$ is the unobserved deviation from the FSF ($u=0$ if and only if $\omega \in \arg\max_{\omega'} \widetilde g(x,\omega') $). This framework allows an unrestricted joint distribution of $(u,x)$, and hence endogenous $x$, requiring only $u \geq 0$. 

In addition to the FSF in \eqref{eq:FSF}, we consider the mean deviation:
\begin{align*}
	 E[u \mid x] = g(x) - E[y \mid x].
\end{align*}
For example, in a production setting, for given inputs the FSF equals the frontier production function and the mean deviation equals mean inefficiency.


\subsection{Assignment at the Frontier and Identification} \label{se:id}

We assume that zero is in the support of the deviation, given inputs.

\vspace{0.2cm}
\begin{asn}[Assignment at the Frontier]\label{asn:AAB}
	\begin{align*}
		0 \in \Supp(u \mid x).
	\end{align*}
\end{asn}
\vspace{0.2cm}

In the context of the structural model \eqref{eq:modelnonsep1}, Assumption \ref{asn:AAB} requires that at least one type $\omega \in \arg\max_{\omega'} \widetilde g(x,\omega')$ lies in the support. This is the formal counterpart to the condition in the potential outcomes framework that the frontier type $\tau=0$ lies in the support. Beyond requiring support at zero (equivalently, $\inf(u\mid x)=0$ since $u\ge 0$), the assumption imposes no restriction on how the distribution of $u$ varies with $x$.\footnote{If the infimum deviation is a constant $u_{\min}>0$ common to all inputs, the FSF is identified only up to a parallel shift. This could arise, for example, if deviations reflect regulatory taxes with a uniform minimum level. If instead the infimum varies with inputs, not even the shape of the FSF is identified.}  For example, if $u$ measures inefficiency, the assumption states that given inputs $x$, there exist firms arbitrarily close to full efficiency ($u=0$), though the density near zero may vary with $x$. 



Violations of the assignment at the frontier assumption may be empirically observable. Economic theory can impose shape constraints on the frontier (e.g., monotonicity, concavity, or homogeneity), so an estimated frontier that violates any of these (e.g., a downward sloping supply curve or increasing returns where only constant or decreasing returns are admissible) suggests the frontier is not being observed in that region. 

 The FSF is identified by the supremum of the outcome given inputs.
\vspace{0.2cm}
\begin{prop}
	Assume that \eqref{eq:model1}--\eqref{eq:model2} and Assumption \ref{asn:AAB} hold.
	Then $g(x)$ is identified.
\end{prop}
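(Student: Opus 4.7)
The plan is to show that $g(x)$ equals the conditional essential supremum of $y$ given $x$, which is a functional of the observed joint distribution of $(y,x)$ and hence identified.

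First I would establish the upper bound. From the frontier model $y = g(x) - u$ together with $u \geq 0$, it follows immediately that $y \leq g(x)$ almost surely conditional on $x$. Thus the conditional essential supremum satisfies $\sup(y \mid x) \leq g(x)$.

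Next I would establish the matching lower bound using Assumption \ref{asn:AAB}. Since $0 \in \Supp(u \mid x)$, every open neighborhood of zero has strictly positive conditional probability, i.e., for every $\eps > 0$,
\begin{align*}
	\Pr(u < \eps \mid x) > 0 .
\end{align*}
Translating this statement through $y = g(x) - u$ gives $\Pr(y > g(x) - \eps \mid x) > 0$ for every $\eps > 0$, so $g(x) - \eps$ is not an almost-sure upper bound of $y$ given $x$. Taking $\eps \downarrow 0$ yields $\sup(y \mid x) \geq g(x)$.

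Combining the two inequalities gives $g(x) = \sup(y \mid x)$. Because the conditional distribution of $y$ given $x$ is recovered from the population distribution of $(y,x)$, its essential supremum is a known functional of the data, which delivers identification of $g(x)$. The only delicate point is being careful about the definition of support (smallest closed set carrying full conditional probability) so that $0 \in \Supp(u\mid x)$ is correctly translated into positive mass on arbitrarily small neighborhoods of zero; once that is in place the argument is a two-line squeeze.
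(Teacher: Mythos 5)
Your proposal is correct and takes essentially the same route as the paper: both identify $g(x)$ with the conditional supremum of $y$ given $x$, using $u\ge 0$ for the upper bound and $0\in\Supp(u\mid x)$ to force $\inf(u\mid x)=0$ for the matching lower bound. Your version simply unpacks the support/essential-supremum details that the paper's two-line argument leaves implicit.
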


\begin{proof}
	By Assumption \ref{asn:AAB} ($0\in\Supp(u\mid x)$) and \eqref{eq:model2} ($u\ge 0$), it follows that $\inf(u\mid x)=0$. Using \eqref{eq:model1} ($y=g(x)-u$), we have
	\begin{align*}
		g(x) &= g(x) - \inf(u \mid x) = \sup\big(g(x)-u \mid x\big) = \sup(y \mid x). \qedhere
	\end{align*}
\end{proof}

Thus the assignment at the frontier assumption obviates the need for instrumental variables or other corrective measures for endogeneity.

The above identification argument follows a structure similar to that of mean regression under conditional mean independence:	If $y=g(x)-\eps$ and $E[\eps \mid x]=0$, then
	\begin{align*}
		g(x) &= g(x) - E[\eps \mid x] = E[g(x) - \eps \mid x] = E[y \mid x]. 
	\end{align*}
This same logic also applies when identification relies on instrumental variables.\footnote{Suppose $y=g(x)-\eps$ and there exists exogenous variable $z$ such that $E[\eps \mid z] = 0$. Then $g(x)$ is identified from $E[g(x) \mid z] = E[y \mid z]$, under a completeness condition (i.e., the mapping $g \mapsto E[g(x) \mid z]$ is injective) \citep{newey2003instrumental}.} 

However, the assignment at the frontier and mean independence assumptions are conceptually different. Assignment at the frontier assumes that the support of $u \mid x$ includes zero, so identification does not require exogenous inputs, instruments, or any orthogonality conditions. Moreover, it is a pointwise support condition: at each input value $x$ where identification is claimed, $0 \in \Supp(u \mid x)$. This is an economically substantive restriction. It requires that $g(x)$ be attainable arbitrarily closely at that specific input vector, rather than being ruled out there by inefficiency, distortions, or other frictions. Mean independence, by contrast, assumes an orthogonality condition, here that $E[\eps \mid x]=0$, and places no restrictions on the support of $\eps \mid x$, including its behavior in neighborhoods of zero. Its identifying content comes from cross-$x$ restrictions on conditional means, rather than from a pointwise support condition.

\subsection{Relating the FSF to the ASF and Structural Function} \label{se:ASF}
 
In general, unless strong restrictions are imposed on the structural function $\widetilde g(\cdot)$ in \eqref{eq:modelnonsep1}, the FSF does not recover the ASF or the structural function.\footnote{The ASF is a central object of interest in the control function literature, defined as $ASF(x) := E_{\omega}[\widetilde g(x,\omega)]$. It represents the average outcome for a given $x$ if the endogeneity problem were absent, as it integrates over the marginal distribution of the unobservable $\omega$ rather than the conditional distribution used to compute $E[y \mid x]$.} Nevertheless, there are cases where it does. Specifically, the FSF recovers the ASF under additive separability, and it recovers the full structural function when inefficiency acts solely by rescaling inputs. However, in other specifications, such as models with interactions between unobservables and inputs, the FSF generally fails to identify either the ASF or the structural function. The following three examples in a production-function setting illustrate these different identification cases.

First, consider the model in which inefficiency is additively separable: a type-$\omega$ firm has production function  $g(x) - \omega$, where $\omega \geq 0$. In this case, the FSF identifies the function $g(\cdot)$ by observing the most efficient firms on the frontier (where $\omega=0$). However, the mean regression function, $E[y \mid x] =g(x) - E[\omega \mid x]$, fails to identify $g(x)$ if input choices are endogenous (that is $E[\omega\mid x]$ depends on $x$). Because the ASF is $g(x) - E[\omega]$, the FSF therefore identifies the ASF as well.\footnote{Formally, for $\widetilde g(x,\omega) = g_1(x) - g_2(\omega)$ with $g_2(\omega)\geq g_2(0)$, the FSF is $g_1(x) - g_2(0)$ and the ASF is $g_1(x) - E[g_2(\omega)]$. Their difference, $c = E[g_2(\omega)] - g_2(0)$ is identified by $E[FSF(x)- y]$, provided $FSF(x)$ is identified over the support of $x$.}  
		
Second, consider a model in which inefficiency rescales inputs: a type-$\omega$ firm has production function $g(x e^{-\omega})$, where $x e^{-\omega}$ are inefficiency-adjusted inputs and $\omega$ is a vector with nonnegative components. The structural function is thus $\widetilde g(x, \omega) = g(x e^{-\omega})$. This means the output of a firm with inputs $x$ and inefficiency $\omega$ equals that of an efficient ($\omega=0$) firm with inputs rescaled to $x e^{-\omega}$. In this case, the FSF identifies the entire structural function. Once the function $g(\cdot)$ is obtained from the frontier, the structural outcome for any $(x, \omega)$ pair is known simply by evaluating the frontier function at the rescaled inputs.\footnote{Formally, the FSF identifies the function $g(\cdot)$ because $FSF(x) = \widetilde g(x,0) = g(xe^{-0}) = g(x)$. The structural function for any $(\omega,x)$ is $\widetilde g(x, \omega) = g(xe^{-\omega})$. Therefore, the structural function is simply the identified frontier function evaluated at the rescaled input vector, i.e., $\widetilde g(x, \omega) = FSF(xe^{-\omega})$.}  However, the ASF is not identified in general: a scalar outcome $y$ is not sufficient to uniquely determine the multidimensional inefficiency vector $\omega$ (and thus not its distribution). An exception is the single-input case with a strictly monotonic frontier $g$, which allows $\omega$ to be recovered from the observed data as $\omega = \ln(x) - \ln(g^{-1}(y))$.

Finally, consider a model in which inefficiency interacts with inputs: a type-$\omega$ firm has production function $y=\beta x-\alpha x\omega-\omega$, where $x,\beta,\alpha,\omega\ge 0$. The marginal productivity of such a firm is $\beta-\alpha\omega$. However, consider the alternative model in which a type-$\omega^*$ firm has production function $y=\beta x-\omega^*$, where $\omega^*:=(\alpha x+1)\omega$. The marginal productivity of such a firm is $\beta$.  The two models are observationally equivalent in $(x,y)$, but differ in that the first assumes that $\omega$ is invariant to a change in $x$, whereas the second assumes that it is $\omega^*$ that is invariant.  The models share the same $FSF(x)=\beta x$ but differ in the ASF. While the frontier is identified by observing efficient firms with $\omega=0$, the interaction parameter $\alpha$ is not.\footnote{More generally, for any alternative parameter $\alpha^* \ge 0$, the model $y=\beta x-\alpha^*x\omega^*-\omega^*$ with $\omega^*:=\frac{\alpha x+1}{\alpha^*x+1}\omega$ has interaction $\alpha^*$ yet generates the same $(x,y)$ distribution, with mean marginal productivity $\beta-\alpha^*E[\omega^*\mid x]$ rather than $\beta-\alpha E[\omega\mid x]$. Consequently, $\alpha$ is not identified.}

\subsection{Random Errors} \label{se:modelerror}


Consider the frontier model \eqref{eq:model1}--\eqref{eq:model2} with random errors in a panel setting with time-invariant deviations, so that for any given firm
\begin{align}
	y_{t} &= g(x_{t}) - u + v_{t}, \quad t=1,\ldots,T, \label{eq:modelerror1}\\
	u &\geq 0, \label{eq:modelerror2}
\end{align}
where $u$ is a time-invariant deviation and $v_{t}$ is a random error.

Assume that the errors are mean independent of the inputs and that errors and deviation are mutually independent conditional on inputs.
\vspace{0.2cm}
\begin{asn}\label{asn:v}  \
	\begin{enumerate}[(i)]
		\item $E[v_t \mid x_1,\ldots,x_T] = 0$ for all $t$;
		\item $u \indep (v_1,\ldots,v_T) \mid (x_1,\ldots,x_T)$;
		\item $(v_1,\ldots,v_T)\mid(x_1,\ldots,x_T)$ are mutually independent and identically distributed.
	\end{enumerate}
\end{asn}
\vspace{0.2cm}

The independence in Assumption~\ref{asn:v}(ii) holds only conditional on $(x_{1},\ldots,x_{T})$, allowing dependence between $u$ and $\{v_t\}_{t=1}^T$ through the (potentially endogenous) inputs. Conditional on $(x_1,\ldots,x_T)$, the distribution of $y_t$ is the convolution of the distribution of $g(x_t)-u$ with that of $v_t$.

In a cross-sectional setting (i.e., $T=1$), identification of $g$ relies on differences in the supports of $u$ on $[0,\infty)$ and $v$ on $(-\infty,\infty)$, often exploiting asymmetry of $u$ and symmetry of $v$.\footnote{\cite{schwarz2010consistent} show identification when $u$ has gaps in its support and $v \sim N(0,\sigma^2)$ with unknown $\sigma^2>0$. \cite{delaigle2016methodology} show identification when $u$ is symmetric and indecomposable and $v$ is symmetric. \cite{bertrand2019flexible} show identification when $u$ is bounded and $v \sim N(0,\sigma)$. \cite{florens2020estimation} show identification when $\Supp(u) = [0,\infty)$ and $\Supp(v) = (-\infty,\infty)$ with $v$ symmetric.} In contrast, panel data (i.e., $T \geq 2$) lets us separate the central moments of $u$ and $v_t$ using within- and between-firm variation, through within-firm differencing and between-firm averaging. For the procedures below, we therefore only require identification of the second through fourth conditional central moments (Appendix~\ref{ap:estdetails}). 

\subsection{Bounding Mean Deviations}  \label{se:bound}

When data near the frontier are scarce, direct estimation of the frontier is infeasible from the data alone; instead, any estimate must rely on extrapolation based on distributional assumptions. This undermines precise estimation of the mean deviation based on the difference between an estimated frontier and the observed outcome. However, the mean deviation can still be bounded from below by a method that is robust to scarcity of data near the frontier. In this section, we use the Stieltjes moment problem to derive a family of lower bounds on the mean of a nonnegative random variable using only central moments. One bound in this family that uses only variance and skewness is especially useful in practice, as these moments are often reliably estimable even when data are scarce near the frontier, thereby avoiding strong assumptions about the distribution of the deviation near zero. These bounds provide no meaningful information about $g(x)$, as they impose no constraints on its derivatives.

Moment problems seek necessary and sufficient conditions for a sequence of raw moments $(m_j)_{j=1}^\infty$, where $m_j = E[u^j]$, to correspond to the distribution of a random variable~$u$ \citep{schmudgen2017moment}.\footnote{Recent work uses moment problems in fixed-effects logit models \citep{davezies2025identification,dobronyi2021identification}.} The Stieltjes moment problem adds the restriction that the support of $u$ is contained in the nonnegative real line.  Let $\Delta_n^{(0)}$ and $\Delta_n^{(1)}$ denote the $n\times n$ primary and shifted Hankel matrices:
{\begin{align*}
\Delta_n^{(0)} = 
\begin{pmatrix}
	1      & m_1       & \cdots & m_{n-1} \\
	m_1    & m_2       & \cdots & m_{n}   \\
	\vdots & \vdots & \ddots & \vdots  \\
	m_{n-1}& m_{n}  & \cdots & m_{2n-2}
\end{pmatrix},
\quad
\Delta_n^{(1)} = 
\begin{pmatrix}
	m_1    & m_2      & \cdots & m_{n}   \\
	m_2    & m_3    & \cdots & m_{n+1} \\
	\vdots & \vdots  & \ddots & \vdots  \\
	m_{n}  & m_{n+1}& \cdots & m_{2n-1}
\end{pmatrix}.
\end{align*}
}A well-known result states that if $\Supp(u)\subseteq[0,\infty)$ and 
\[
 \det\bigl(\Delta_n^{(0)}\bigr) \geq 0,\quad \det\bigl(\Delta_n^{(1)}\bigr) \geq 0
\quad\text{for all } n \in \mathbb{N},
\]
then there exists at least one distribution on $[0,\infty)$ with those moments. If any determinant is negative, no such distribution exists.  

In Appendix \ref{ap:primaryhankel}, we show that nonnegativity of the primary Hankel determinants implies well-known inequalities, such as nonnegative variance and that kurtosis exceeds the squared skewness by at least one. These restrictions depend only on central moments and not $E[u]$, so they do not produce bounds on $E[u]$.

Next, we turn to the nonnegative determinants of the shifted Hankel matrices. 
These produce inequalities in the raw moments $m_j$ that, once re-expressed in terms of $E[u]$ and central moments, yield bounds on $E[u]$. The first three are:
\begin{align}
	0 &\leq m_1, \label{eq:Mean}\\
	0 &\leq m_1 m_3 - m_2^2, \label{eq:quad}\\
	0 &\leq m_1(m_3m_5 - m_4^2) - m_2(m_2m_5 - m_3m_4) + m_3(m_2m_4 - m_3^2). \label{eq:cubic}
\end{align}
Expressing raw moments in terms of $E[u]$ and central moments $\mu_j=E[(u - E[u])^j]$, each condition $\det\bigl(\Delta_n^{(1)}\bigr) \geq 0$ becomes an $n$th degree polynomial inequality in $E[u]$ with coefficients that are polynomials in $\mu_2,\ldots,\mu_{2n-1}$. 
Equation \eqref{eq:Mean} states only that $E[u]\geq0$. 
We now state a skewness-based lower bound for $E[u]$, obtained by solving \eqref{eq:quad} for $E[u]$.

\begin{prop}\label{prop:skew-bound}
	\begin{align}
		E[u] \geq \frac{-\mu_3 + \sqrt{\mu_3^2 + 4\mu_2^3}}{2\mu_2}
		=\frac{\sigma}{2}\Bigl(-\gamma + \sqrt{\gamma^2 + 4}\Bigr). \label{eq:boundu}
	\end{align}
	where $\sigma = \sqrt{\mu_2}$ denotes the standard deviation and $\gamma=\mu_3/\mu_2^{3/2}$ denotes the skewness.
\end{prop}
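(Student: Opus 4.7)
The plan is to start from the shifted Hankel inequality \eqref{eq:quad}, namely $m_1 m_3 - m_2^2 \geq 0$, which the preceding discussion has already established as a necessary condition for $\Supp(u)\subseteq[0,\infty)$ via the Stieltjes moment problem. The task is then purely algebraic: re-express this inequality in central-moment form and solve the resulting quadratic in $E[u]$.

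First, I would substitute the standard identities $m_1 = E[u]$, $m_2 = \mu_2 + E[u]^2$, and $m_3 = \mu_3 + 3\mu_2 E[u] + E[u]^3$ into \eqref{eq:quad}. Expanding $m_1 m_3 - m_2^2$, the quartic terms in $E[u]$ cancel and I expect to be left with the quadratic
\begin{align*}
	\mu_2 \, E[u]^2 + \mu_3 \, E[u] - \mu_2^2 \geq 0.
\end{align*}
The leading coefficient $\mu_2$ is nonnegative (strictly positive for nondegenerate $u$), and the product of the two roots equals $-\mu_2 \leq 0$, so the two roots have opposite signs. The quadratic formula gives roots $\bigl(-\mu_3 \pm \sqrt{\mu_3^2 + 4\mu_2^3}\bigr)/(2\mu_2)$.

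Next I would select the correct root. Because the parabola opens upward and has one nonpositive and one nonnegative root, the inequality holds either for $E[u]$ at or above the positive root or at or below the negative root. The assumption $u \geq 0$ forces $E[u] \geq 0$, which rules out the lower branch and leaves
\begin{align*}
	E[u] \;\geq\; \frac{-\mu_3 + \sqrt{\mu_3^2 + 4\mu_2^3}}{2\mu_2},
\end{align*}
matching the first expression in \eqref{eq:boundu}. Finally, substituting $\mu_2 = \sigma^2$ and $\mu_3 = \gamma \sigma^3$ and factoring $\sigma^3$ out of the square root yields the second expression in \eqref{eq:boundu}.

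The main obstacle is conceptual rather than computational: one must justify picking the positive root. The justification has two pieces that should be spelled out explicitly — the nonnegativity of $u$ (hence $E[u] \geq 0$) and the sign pattern of the two roots — since the inequality $m_1 m_3 - m_2^2 \geq 0$ alone is symmetric and does not by itself select a branch. A minor side issue is the degenerate case $\mu_2 = 0$, where the quadratic collapses; but then $u$ is a.s.\ constant, the bound holds trivially, and the right-hand side should be interpreted as a limit.
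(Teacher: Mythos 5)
Your proposal is correct and follows essentially the same route as the paper's proof: expand the shifted Hankel condition \eqref{eq:quad} into the quadratic $\mu_2(E[u])^2+\mu_3E[u]-\mu_2^2\ge 0$ and use $\mu_2>0$ together with $E[u]\ge 0$ from \eqref{eq:Mean} to select the larger root. Your write-up is somewhat more explicit than the paper's (verifying the quartic cancellation, the opposite signs of the roots, and the degenerate case $\mu_2=0$), but these are elaborations of the same argument rather than a different approach.
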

\begin{proof}
	Expanding \eqref{eq:quad} gives a quadratic in $E[u]$:
	$0 \le \mu_2(E[u])^2+\mu_3E[u]-\mu_2^2$.
	Since $\mu_2>0$ and $E[u]\ge 0$ by \eqref{eq:Mean}, this holds if and only if $E[u]$ is at least the larger root. \qedhere
\end{proof}

The right-hand side of \eqref{eq:boundu} is nonnegative and strictly decreasing in $\gamma$, converging to $0$ as $\gamma\to\infty$.

If one is prepared to assume that $u$ is negatively skewed, then the proposition implies that the standard deviation provides a lower bound on the mean deviation. This is useful for inferring bounds on mean inefficiency from reported empirical results on heterogeneity in firm cost or productivity.\footnote{For example, \cite{syverson2004market} reports a 0.34 standard deviation in TFP of plants in high construction markets, with a pictured probability density distribution that shows no substantial right-skewness (pages 1184 and 1185).  The bound then allows us to reasonably conclude that mean inefficiency is at least 0.34. 
	In contrast,  \cite{foster2008reallocation}, who report a 0.22 mean standard deviation of TFP across a large set of industries, provide no information on higher moments that might allow us to bound mean inefficiency.} 
Specifically, if $\gamma\leq 0$ then 
\begin{align*}
	E[u] \geq \frac{\sigma}{2}\Bigl(-\gamma + \sqrt{\gamma^2 + 4}\Bigr)
	\geq \sigma.  
\end{align*}

Expanding \eqref{eq:cubic} gives a cubic inequality in $E[u]$.\footnote{The full expansion is
	\begin{align*}
		0 \le & (\mu_2 \mu_4-\mu_2^3-\mu_3^2)(E[u])^3
		+(\mu_2 \mu_5-\mu_2^2\mu_3-\mu_3\mu_4)(E[u])^2 
		+(\mu_3\mu_5-\mu_4^2-\mu_2\mu_3^2+\mu_2^2\mu_4)E[u]
		+(2\mu_2\mu_3\mu_4-\mu_3^3-\mu_2^2\mu_5).
	\end{align*}
}
If $\mu_3\leq 0$ and $\mu_5\leq 0$ then
$$
	E[u]\geq\sqrt{\frac{\mu_4}{\mu_2}}=\sigma \times \sqrt{\kappa},
$$
where $\kappa= \mu_4/\mu_2^2$ denotes the kurtosis.

Higher-order determinants produce polynomial inequalities in $E[u]$ that may yield tighter bounds than the skewness-based bound in \eqref{eq:boundu}. While we have clear intuition for variance and skewness, higher-order moments, such as the fifth central moment, are less transparent in their interpretation. Moreover, they are challenging to estimate accurately due to their sensitivity to outliers, which is why higher-order moments are rarely employed in practice.


\section{Estimation} \label{se:est}

Estimating a frontier is challenging. Under certain tail behaviors, extreme value theory implies logarithmic convergence rates. For example, \citet{goldenshluger2004bndry} assume normally distributed errors and show that the best-possible convergence rate for frontier estimation is logarithmic. Nonparametric methods for estimating the densities of $u$ and $v$ via deconvolution of \eqref{eq:modelerror1} have not been widely adopted in practice, as they require tuning parameters, involve substantial computational complexity, and converge slowly. Consequently, researchers often impose parametric assumptions to simplify estimation and improve convergence rates.

When $x$ is continuous, estimation becomes even more demanding, as it often requires smoothing in $x$ or restricting how the distributions of $u$ or $v$ vary with $x$. These restrictions can be problematic, especially when one wishes to allow for endogeneity. These challenges motivate two broad estimation strategies in the SFA literature: maximum likelihood \citep[e.g.,][]{aigner1977formulation} and moment-based estimators \citep[e.g., OLS-based corrections such as COLS,][]{olson1980monte}. We adopt the latter in a panel-data setting, nonparametrically estimating the conditional mean outcome and second through fourth conditional central moments of deviations and errors. Appendix~\ref{ap:estdetails} collects moment identities and sample formulas used for the panel estimator, and Appendix~\ref{ap:datastruct} discusses related estimators under alternative data structures. 

In earlier versions of the paper, we also implemented a cross-sectional variant (Appendix~\ref{ap:cs}). In Monte Carlo simulations and in the empirical application, the cross-sectional moment-matching objective often exhibited multiple well-fitting local optima. With only second through fourth central moments, tail thickness can trade off between the deviation and the random error, so different decompositions can fit the data similarly while attributing heavy tails to different components. This is consistent with \citet{parmeter2024inference}, who show in cross-sectional Monte Carlo simulations that specification tests can have empirical rejection rates that deviate substantially from nominal size in finite samples, even under correct parametric specification and when optimization is initialized at the true parameter values.

Consider the model \eqref{eq:modelerror1}--\eqref{eq:modelerror2} with panel data $\{y_{it},x_{it}\}$ and time-invariant deviations:
\begin{align}
	y_{it} &= g(x_{it}) - u_i + v_{it}, \quad t=1,\ldots,T_i,\ i=1,\ldots,n,
	\label{eq:model_fe1}\\
	u_i &\ge 0.
	\label{eq:model_fe2}
\end{align}

Suppose Assumption \ref{asn:v} holds and let $x_i=(x_{i1},\ldots,x_{iT_i})'$.  Our identification strategy relies on decomposing the variation in $y_{it}$ into ``within'' and ``between'' components to separate the central moments of the random error $v_{it}$ from the time-invariant deviation $u_i$. Define the population centered residuals as $\eps_{it} = y_{it} - E[y_{it} \mid  x_i]=-(u_i-E[u_i\mid   x_i])+v_{it}$. We decompose these residuals into:
\begin{align}
	\text{Within:} \quad \eps_{it}^w &= \eps_{it} - \bar \eps_i = v_{it} - \bar v_i, \label{eq:within_id} \\
	\text{Between:} \quad \bar \eps_i &= \frac{1}{T_i}\sum_{t=1}^{T_i} \eps_{it}= -( u_i - E[u_i \mid  x_i]) + \bar v_i, \label{eq:between_id}
\end{align}
where $\bar v_i=T_i^{-1}\sum_{t=1}^{T_i} v_{it}$ is the time-averaged error. Equation \eqref{eq:within_id} shows that the within variation depends only on $v$, allowing identification of the conditional error central moments $\mu_{k,v}(x_i)$.\footnote{For a random variable $w$, let $\mu_{k,w}(z)=E[(w-E[w\mid z])^k\mid z]$ denote the $k$th central moment conditional on $z$.} Equation \eqref{eq:between_id} contains $u$ and the averaged $v$, so that once the central moments of $v$ are identified, they can be subtracted from the central moments of $\bar \eps_i$ to recover the central moments of $u$.

Estimation proceeds in three stages. First, we estimate the conditional expectation $E[y_{it} \mid  x_i]$ using a flexible nonparametric regression (where $	E[v_{it}\mid  x_i]=0$ by Assumption \ref{asn:v}(i)). This yields the sample residuals $\hat \eps_{it}$, which we decompose into sample within residuals $\hat \eps_{it}^w$ and sample between residuals $\bar{\hat \eps}_i$ (analogous to \eqref{eq:within_id} and \eqref{eq:between_id}).

Second, we estimate the conditional central moments of the unobserved components. Using Assumption~\ref{asn:v}(iii), we relate the empirical central moments of the within residuals $\hat \eps_{it}^w$ to the error central moments and solve for $\widehat\mu_{k,v}( x_i)$. Then, using Assumption \ref{asn:v}(ii), we construct bias-corrected powers of the between residuals $\widehat u_i^k$. We construct these by taking the powers $(\bar{\hat \eps}_i)^k$ and removing the contribution of the error term $\bar v_i$ using the estimates $\widehat\mu_{k,v}( x_i)$. Following the literature on variance estimation under heteroskedasticity \citep[e.g.,][]{hall1989variance,fan1998efficient}, we regress $\widehat u_i^k$ on $ x_i$ to obtain smoothed estimates of the deviation central moments $\widehat\mu_{k,u}( x_i)$.\footnote{The explicit formulas for the moment conditions and bias-corrected powers are provided in Appendix~\ref{ap:estdetails} \citep[see also][]{BenMosheGenesoveMoments}.}

In the final stage, we specify a parametric family for the deviation distribution $u\mid x$, and estimate parameters by the method of moments. We match model-implied central moments to the second-stage estimates, both unconstrained and subject to the near-frontier mass constraint described next.

Assumption \ref{asn:AAB} implies that every neighborhood $[0,\delta]$ has positive conditional probability mass:
\begin{align}
	F_{u\mid x}(\delta):=\Pr(u\le \delta\mid x)>0 \qquad \text{for all } \delta>0. \label{eq:cdf neigh zero}
\end{align}
The assumption does not require a density at zero, nor does it control how quickly $F_{u\mid x}(\delta)$ shrinks as $\delta\downarrow 0$ (see Appendix~\ref{ap:AAF}). To prevent our estimation from admitting distributions with arbitrarily thin left tails near zero, we regularize by imposing a finite-sample minimum near-frontier mass constraint. For chosen constants $c>0$ and $m_0>0$, we estimate $\widehat \theta(x)$ by
\begin{align}
	\widehat \theta(x)
	& = 
	\arg\min_{\theta}\  \sum_{k\in\{2,3,4\}}  \Big(\widehat \mu_{k,u}(x) - \mu_k(x;\theta)\Big)^2  
	\label{eq:mm_objective}\\
	\text{s.t.}& \quad
	F_{u\mid x;\theta}\big(c \cdot \widehat \sigma_u(x)\big) \ge \frac{m_0}{\widehat n_{\mathrm{eff}}(x)} ,
	\label{eq:mm_constrained}
\end{align}
where $\widehat \sigma_u(x)=(\widehat \mu_{2,u}(x))^{1/2},$ $\mu_k(x;\theta)$ denotes the $k$th conditional central moment of $u\mid x$ under $\theta$, and $F_{u\mid x;\theta}$ denotes the corresponding conditional CDF.\footnote{Equivalently, the constraint can be written as $Q_{u\mid x;\theta}(m_0/\widehat n_{\mathrm{eff}}(x))\le c\cdot \widehat\sigma_u(x)$, where $Q$ is the conditional quantile function.} We then compute $\widehat E[u \mid x]= E[u\mid x;\widehat\theta(x)]$ and estimate the FSF by $\widehat g(x_{it}) = \widehat E[y_{it} \mid x_i] + \widehat E[u \mid x_i]$.

Central moments are translation-invariant: $u$ and $u+\alpha$ have identical central moments for any constant $\alpha$. In the population, nonnegativity and assignment at the frontier ($u\ge 0$ and $0\in \Supp(u\mid x)$) distinguish $u$ and $u+\alpha$ for $\alpha>0$, because $u$ has mass arbitrarily near zero whereas $u+\alpha$ does not. In finite samples, however, when data near the frontier are sparse, this distinction can be weak: neither a sample from $u$ nor a sample from $u+\alpha$ may contain observations near zero, even though their means differ by $\alpha$.\footnote{For example, let $u \sim q\cdot\mathrm{Beta}(a,b)$, where $(a,b)$ determine the shape and $q$ scales the support. When observations near zero are sparse, different values of $q$ can produce distributions with similar shapes over the region where data are observed, while implying very different support widths and hence very different means.} As a result, the moment-matching objective \eqref{eq:mm_objective} can admit parameter values that fit the estimated central moments similarly while implying very different means. The estimator can therefore fit the observed shape well while remaining weakly informative about the location of that shape relative to zero. The constraint \eqref{eq:mm_constrained} restores this finite-sample distinction by requiring mass near zero, preventing the fitted distribution from drifting away from the frontier. Indeed, as illustrated in the empirical application (Figure~\ref{fig:density_zoom}), unconstrained and constrained fits can have similar overall shape while implying very different means, because the unconstrained fit places much less mass near the frontier.\footnote{The estimator also admits a quasi-Bayesian interpretation. Moment-based estimators can be viewed as maximizing a quasi-likelihood derived from the moment distance \citep{chernozhukov2003mcmc}. The constraint is equivalent to specifying a prior with restricted support that assigns zero mass to the set of parameter values violating the boundary restriction.} Thus, \eqref{eq:mm_constrained} should be understood as finite-sample regularization, not an exogeneity assumption; it does not assume that a conditional quantile of the deviation is constant across inputs.\footnote{Some approaches estimate the frontier as an extreme conditional quantile of the outcome given inputs \citep[see, e.g.,][]{daouia2007nonparametric,daouia2010frontier}. \cite{cazals2016nonparametric} develop a nonparametric instrumental variable version to deal with endogeneity.}

Using $c\cdot \widehat\sigma_u(x)$ defines the neighborhood in local standard-deviation units, so the restriction adapts to the local dispersion of $u\mid x$. The effective sample size at $x_i$ is $\widehat n_{\mathrm{eff}}(x_i):=1/\sum_{j=1}^n w_{ij}^2$ \citep{kish1965survey}, where $w_{ij}=K(\|x_i-x_j\|/h)\big/\sum_{r=1}^n K(\|x_i-x_r\|/h)$, $K(\cdot)$ is a nonnegative kernel function with weights normalized to sum to one, and $h$ is a bandwidth. This definition implies $\widehat n_{\mathrm{eff}}(x_i)=n$ under uniform weights ($w_{ij}=1/n$), and $\widehat n_{\mathrm{eff}}(x_i)$ decreases as weights become more concentrated on observations near $x_i$. 

The threshold $m_0/\widehat n_{\mathrm{eff}}(x)$ scales with the amount of information available for learning near-zero behavior at $x$. The constraint can be rewritten as a conditional moment inequality,
\begin{align*}
\widehat n_{\mathrm{eff}}(x) \cdot F_{u\mid x;\theta}\!\big(c \cdot \widehat \sigma_u(x)\big)\ \ge\ m_0,
\end{align*}
so that $m_0$ is the required expected number of effective observations within the neighborhood $[0,c\cdot \widehat\sigma_u(x)]$. For example, if $\widehat n_{\mathrm{eff}}=25$ and $m_0=1$ then $m_0/\widehat n_{\mathrm{eff}}=0.04$, so the constraint requires at least 4\% probability mass within $[0,c\cdot \widehat\sigma_u]$, corresponding, on average, to one effective observation in that neighborhood. If $\widehat n_{\mathrm{eff}}=100$ and $m_0=1$ then $m_0/\widehat n_{\mathrm{eff}}=0.01$, yielding a weaker restriction.

Relative to cross-sectional strategies that rely on asymmetry of $u$ and symmetry of $v$ for identification, our approach exploits the panel structure: within-variation identifies the random error, while between-variation identifies the deviation. Consequently, we do not impose any symmetry restriction on $v$; provided the relevant moments exist, its distribution is otherwise left unspecified. The distribution of $u$ may be left-skewed, right-skewed, or symmetric, unlike standard SFA specifications (e.g., half-normal, truncated normal, exponential, or gamma), which impose right skewness \citep{greene2008econometric}. 


\section{Monte Carlo Simulations}\label{se:sim}

We report Monte Carlo simulations to assess the finite-sample performance of the lower bound and point estimators of $E[u]$.  Two finite-sample issues arise. First, in small samples the estimated lower bound can exceed the estimated mean, so that $\widehat{LB}>\widehat{E}[u]$. Second, the estimator of $E[u]$ can be highly inaccurate even when $\widehat{LB}$ remains accurate. Motivated by the ill-posed mapping from central moments to mean deviation, we study whether imposing a near-frontier mass constraint can regularize this mapping and improve estimation of $E[u]$.

We simulate a panel with time-invariant deviations,
\[
y_{it} = g - u_i + v_{it}, \qquad i = 1,\ldots,n,\; t = 1,\ldots,T,
\]
where the random errors are normally distributed, $\{v_{it}\}_{i=1,t=1}^{n,T} \overset{\text{iid}}{\sim} \mathrm{N}(0, \sigma_v^2)$. The deviations are generated via a rejection sampling mechanism: we draw candidate deviations $\tilde{u} \sim q\cdot\mathrm{Beta}(a,b)$ and retain them with probability $1-p$ if they fall below the $f$-th quantile and with probability 1 otherwise, repeating until $n$ deviations $\{u_i\}_{i=1}^n$ are obtained. We set $g=5$, $q=4$, and $T=8$, and the scarcity parameters to $f=0.05$ and $p=0.95$. The shape parameters $(a,b)$ vary over an equally spaced $80\times 80$ grid in $(\log a,\log b)\in[-2,2]^2$. For each configuration $(a,b)$ on the grid and each $n\in\{25,250,2\,500\}$, we generate 150 independent Monte Carlo replications of panel datasets, each of size $nT$. For each design, we set $\sigma_v$ equal to the standard deviation of $u$ from the rejection sampling mechanism.

We treat the outcomes $\{y_{it}\}_{i=1,t=1}^{n,T}$ as the only observed variables; the deviations $\{u_i\}_{i=1}^n$ and errors $\{v_{it}\}_{i=1,t=1}^{n,T}$ are treated as unobserved. For each simulated dataset, we estimate the deviation central moments $\mu_{2,u}$, $\mu_{3,u}$, $\mu_{4,u}$, the lower bound $LB$, and the mean deviation $E[u]$ using the three-step procedure in Section~\ref{se:est}. First we compute residuals and decompose them into within- and between-firm components:
\begin{align*}
	\hat \eps_{it}^w = \hat \eps_{it} - \bar{\hat \eps}_i,
	\qquad
	\bar{\hat \eps}_i = \frac{1}{T}\sum_{t=1}^T \hat \eps_{it},
	\qquad
	\hat \eps_{it} = y_{it} - \bar y, 
	\qquad
	\bar y = \frac{1}{nT}\sum_{i=1}^n\sum_{t=1}^T  y_{it}.
\end{align*}

Second, we estimate the central moments of $u$ using the closed-form estimators in \eqref{eq:mu2uhat_pool}--\eqref{eq:mu4uhat_pool} in Appendix~\ref{ap:homo}. We then estimate the lower bound by the sample analog of~\eqref{eq:boundu},
\begin{align*}
	\widehat{LB} = \frac{\widehat{\sigma}_u}{2} \Big(-\widehat{\gamma}_u + \sqrt{\widehat{\gamma}_u^2 + 4}\Big),
\end{align*}
where $\widehat{\sigma}_u = (\widehat{\mu}_{2,u})^{1/2}$ and $\widehat{\gamma}_u = {\widehat{\mu}_{3,u}}/{\widehat{\sigma}_u^3}$.

Third, to estimate the mean deviation $\widehat{E}[u]$, we fit the parameters of a $q \cdot \mathrm{Beta}(a,b)$ distribution. We estimate $(\widehat a,\widehat b,\widehat q)$ by matching the estimated central moments to the model-implied central moments while imposing the near-frontier mass constraint. With $m_0=1$ and $c\in \{0.5,1.00, \infty\}$, we solve
\begin{align*}
	(\widehat a, \widehat b, \widehat q)
	& \in\ 
	\arg\min_{a, b, q > 0}\  \sum_{k \in \{2,3,4\}} \Big(\widehat \mu_{k,u} - \mu_k(a,b,q)\Big)^2 \\
	\text{s.t.}& \quad
	F_{\mathrm{Beta}(a,b)}\!\big({c \cdot \widehat \sigma_u}/{q}\big) \ge \frac{m_0}{n},
\end{align*}
where $F_{\mathrm{Beta}(a,b)}$ is the CDF of a $\mathrm{Beta}(a,b)$ random variable on $[0,1]$. For a fixed neighborhood width $c\cdot \widehat\sigma_u$, the constraint requires at least $m_0$ expected observations in this neighborhood. The unconstrained estimator corresponds to $c=\infty$. We then compute the implied mean, $\widehat{E}[u] = \widehat q \cdot \widehat a / (\widehat a + \widehat b)$.

We partition the parameter grid into four regions defined by the qualitative shape of the distribution of $u$. The high near-frontier mass region ($a<1$) corresponds to densities that are high near zero. The unimodal right-skewed ($a\geq 1, b\geq 1, b \geq a$) and unimodal left-skewed ($a\geq 1, b\geq 1, b < a$) regions correspond to bell-shaped densities that are right-skewed (or symmetric) and left-skewed, respectively. The low near-frontier mass region ($a\ge 1,b<1$) corresponds to densities with little mass near zero, rising toward the upper end of the support.

Figure~\ref{fig:EuLB} and Table~\ref{table:EuLB} compare the unconstrained estimator $\widehat{E}[u]$ with its lower bound estimator $\widehat{LB}$. The top row of Figure~\ref{fig:EuLB} shows, for each $(a,b)$, the fraction of the 150 Monte Carlo replications in which $\widehat{LB}>\widehat{E}[u]$, a violation of the population inequality~\eqref{eq:boundu}. Table~\ref{table:EuLB} shows the fractions averaged across grid points within each region: for $n=25$, 5.70\% in high near-frontier mass, 3.49\% in low near-frontier mass, 0.74\% in unimodal right-skewed, and 1.19\% in unimodal left-skewed. As $n$ increases to $250$, these violations virtually disappear.

\begin{figure}[!b]
	\centering
	
	\begin{subfigure}{0.32\textwidth}
		\centering
		\includegraphics[width=\linewidth,trim=6 3 6 6,clip]{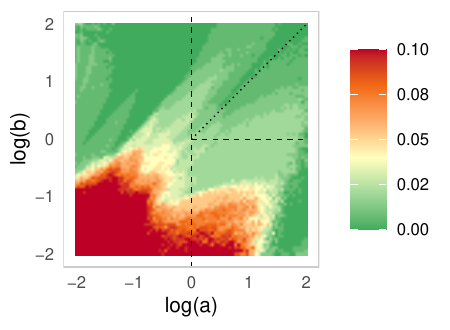}
		\caption{$n=25$, MC share $\widehat{LB} > \widehat{E}[u]$}
	\end{subfigure}\hfill
	\begin{subfigure}{0.32\textwidth}
		\centering
		\includegraphics[width=\linewidth,trim=6 3 6 6,clip]{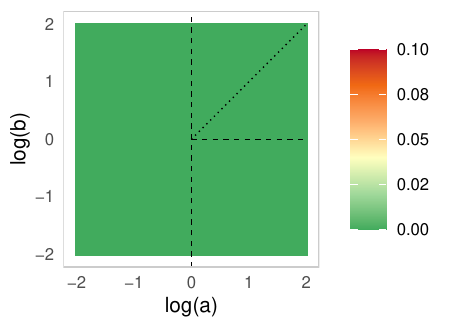}
		\caption{$n=250$, MC share $\widehat{LB} > \widehat{E}[u]$}
	\end{subfigure}\hfill
	\begin{subfigure}{0.32\textwidth}
		\centering
		\includegraphics[width=\linewidth,trim=6 3 6 6,clip]{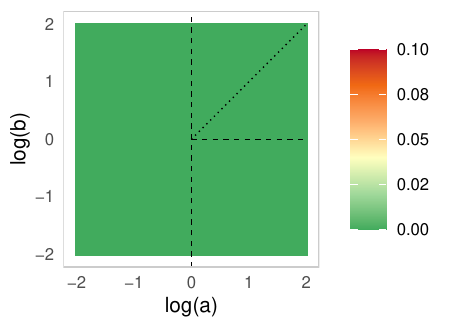}
		\caption{$n=2,500$, MC share $\widehat{LB} > \widehat{E}[u]$}
	\end{subfigure}
	
	\vspace{0.3cm}
	
	\begin{subfigure}{0.32\textwidth}
		\centering
		\includegraphics[width=\linewidth,trim=6 3 6 6,clip]{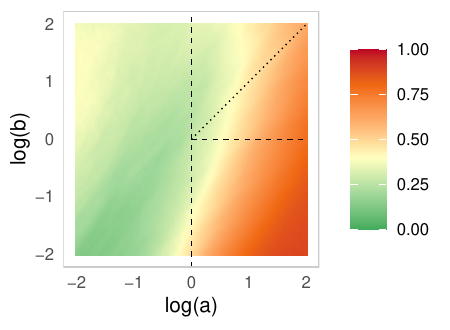}
		\caption{$n=25$, median $\frac{|\widehat{E}[u]-E[u]|}{E[u]}$}
	\end{subfigure}\hfill
	\begin{subfigure}{0.32\textwidth}
		\centering
		\includegraphics[width=\linewidth,trim=6 3 6 6,clip]{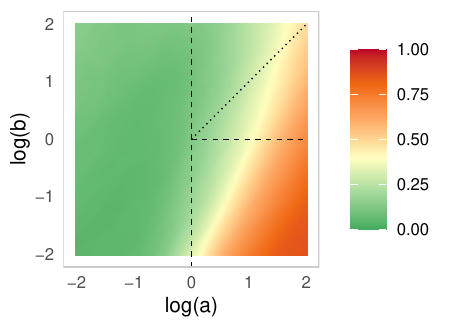}
		\caption{$n=250$, median $\frac{|\widehat{E}[u]-E[u]|}{E[u]}$}
	\end{subfigure}\hfill
	\begin{subfigure}{0.32\textwidth}
		\centering
		\includegraphics[width=\linewidth,trim=6 3 6 6,clip]{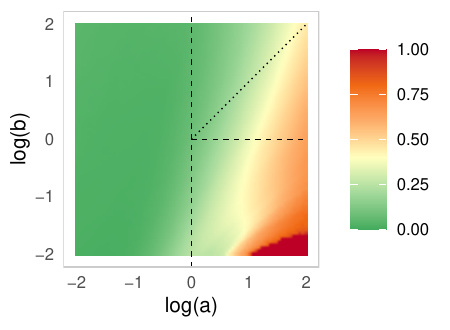}
		\caption{$n=2,500$, median $\frac{|\widehat{E}[u]-E[u]|}{E[u]}$}
	\end{subfigure}
	
	\vspace{0.3cm}
	
	\begin{subfigure}{0.32\textwidth}
		\centering
		\includegraphics[width=\linewidth,trim=6 3 6 6,clip]{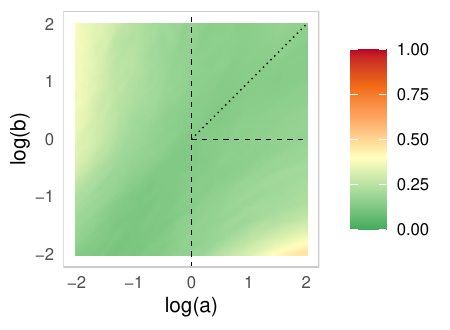}
		\caption{$n=25$, median $\frac{|\widehat{LB}-LB|}{LB}$}
	\end{subfigure}\hfill
	\begin{subfigure}{0.32\textwidth}
		\centering
		\includegraphics[width=\linewidth,trim=6 3 6 6,clip]{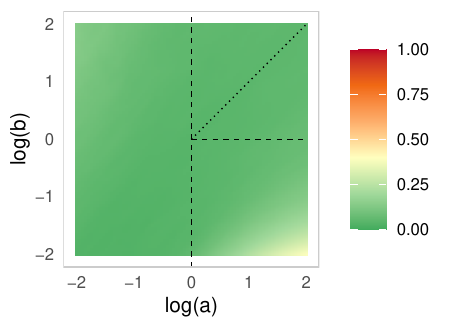}
		\caption{$n=250$, median $\frac{|\widehat{LB}-LB|}{LB}$}
	\end{subfigure}\hfill
	\begin{subfigure}{0.32\textwidth}
		\centering
		\includegraphics[width=\linewidth,trim=6 3 6 6,clip]{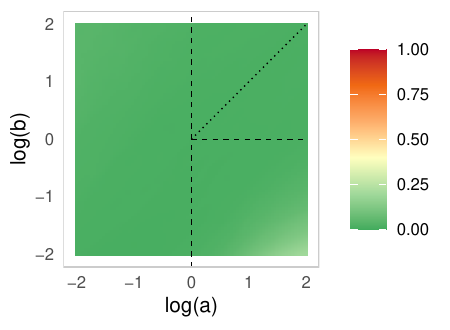}
		\caption{$n=2,500$, median $\frac{|\widehat{LB}-LB|}{LB}$}
	\end{subfigure}
	
	\caption{
		Heatmaps over the parameter grid $(\log a,\log b)$. Columns correspond to sample sizes $n\in\{25,250,2\,500\}$. The top row shows the fraction of Monte Carlo replications in which $\widehat{LB}>\widehat{E}[u]$. The middle row shows the median relative absolute error of $\widehat{E}[u]$, and the bottom row shows the median relative absolute error of $\widehat{LB}$, each computed across the 150 Monte Carlo replications at a given grid point.
	}
	\label{fig:EuLB}
\end{figure}

\begin{table}[!t]
	\centering
	\footnotesize
	\vspace{0.1cm}
	\setlength{\tabcolsep}{2.5pt}
	\renewcommand{\arraystretch}{0.9}
	\begin{threeparttable}
		\caption{Comparison of Mean and Lower-Bound Estimates by Region}
		\label{table:EuLB}
		\begin{tabular}{@{}l cccc cccc cccc@{}}
			\toprule
			& \multicolumn{4}{c}{MC share $\widehat{LB} > \widehat{E}[u]$}
			& \multicolumn{4}{c}{Median $\frac{|\widehat{E}[u] - E[u]|}{E[u]}$}
			& \multicolumn{4}{c}{Median $\frac{|\widehat{LB} - LB|}{LB}$} \\
			\cmidrule(lr){2-5} \cmidrule(lr){6-9} \cmidrule(lr){10-13}
			$n$ & High-NFM & Uni-R & Uni-L & Low-NFM & High-NFM & Uni-R & Uni-L & Low-NFM & High-NFM & Uni-R & Uni-L & Low-NFM \\
			\midrule
			25    & 0.057 & 0.007 & 0.012 & 0.035 & 0.275 & 0.380 & 0.566 & 0.644 & 0.205 & 0.152 & 0.153 & 0.203 \\
			250   & 0.000 & 0.000 & 0.000 & 0.000 & 0.090 & 0.212 & 0.422 & 0.532 & 0.061 & 0.052 & 0.055 & 0.101 \\
			2,500 & 0.000 & 0.000 & 0.000 & 0.000 & 0.040 & 0.170 & 0.391 & 0.603 & 0.023 & 0.019 & 0.017 & 0.037 \\
			\bottomrule
		\end{tabular}
	\begin{tablenotes}[para,flushleft]
		\hspace*{-0.4em} Notes:
		Regions are defined by shape parameters $(a,b)$:
		High-NFM, high near-frontier mass ($a<1$);
		Uni-R, unimodal right-skewed ($1 \le a \le b$);
		Uni-L, unimodal left-skewed ($1 \le b < a$);
		Low-NFM, low near-frontier mass ($a \ge 1,\, b<1$).
		For each grid point, statistics are first computed across the 150 Monte Carlo replications.
		The reported entries then average these grid-point statistics across all grid points in the region.
		The first column group reports the mean across grid points of the fraction of replications in which the lower-bound estimate exceeds the mean estimate.
		The second and third column groups report the mean across grid points of the median relative absolute error of $\widehat{E}[u]$ and $\widehat{LB}$, respectively.
	\end{tablenotes}
	\end{threeparttable}
\end{table}

The middle and bottom rows of Figure~\ref{fig:EuLB} show, for each $(a,b)$, the median relative absolute errors ${|\widehat{E}[u]-E[u]|}/{E[u]}$ and ${|\widehat{LB}-LB|}/{LB}$. Table~\ref{table:EuLB}, averaging across grid points within each region, shows that the median relative absolute error of $\widehat{E}[u]$ is large at $n=25$ and varies substantially by region: 27.5\% in high near-frontier mass, 38.0\% in unimodal right-skewed, 56.6\% in unimodal left-skewed, and 64.4\% in low near-frontier mass. In contrast, the lower-bound estimator is more accurate and stable across regions (20.5\%, 15.2\%, 15.3\%, and 20.3\%). As $n$ increases, the median error of $\widehat{LB}$ declines in all regions (to between 1.7\% and 3.7\% at $n=2{,}500$), while the median error of $\widehat{E}[u]$ remains large in parts of the grid: at $n=2{,}500$, it is 4.0\% in the high near-frontier mass region but 60.3\% in the low near-frontier mass region. These results suggest that $\widehat{LB}$, which relies only on skewness and variance, offers a robust alternative to point estimation of $E[u]$.

The lower bound remains stable because it does not require recovering the support width or detailed distributional shape near zero; it depends only on the second and third moments. Next, we examine the near-frontier mass constraint as regularization for mapping estimated second through fourth central moments to $E[u]$. Figure~\ref{fig:mse_heatmap} shows $\log_{10}(\mathrm{MSE})$ of $\widehat{E}[u]$ over the parameter grid for the unconstrained estimator and the two constrained estimators $(m_0,c)=(1,1)$ and $(m_0,c)=(1,0.5)$. Table~\ref{table:bias_mse} reports median bias, median MSE, and mean MSE by region.

For $n=25$, the unconstrained mapping performs reasonably over much of the grid but can generate very large errors in some regions, especially the unimodal left-skewed region. This instability reflects noisy estimation of the central moments. When the distribution is roughly symmetric and tightly concentrated at an interior mode, skewness is close to zero and the second and fourth moments constrain the shape of the distribution much more than its location relative to zero. Small sampling errors in the estimated moments can then be matched by fitted distributions with substantially different support widths $q$, translating into large dispersion in the implied mean. This is visible in Figure~\ref{fig:mse_heatmap}, where the unconstrained estimator has very high MSE in parts of the grid, and in Table~\ref{table:bias_mse}, where it has much larger mean MSE than median MSE in several regions.

\begin{figure}[!t]
	\centering
	
	\begin{subfigure}{0.32\textwidth}
		\centering
		\includegraphics[width=\linewidth,trim=6 3 6 6,clip]{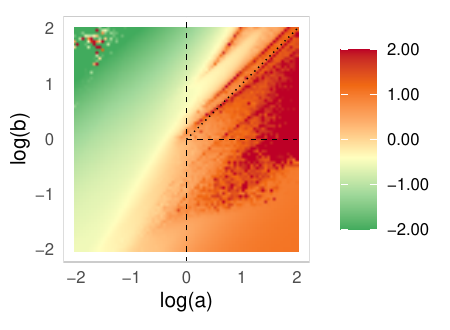}
		\caption{$n=25$, unconstrained}
	\end{subfigure}\hfill
	\begin{subfigure}{0.32\textwidth}
		\centering
		\includegraphics[width=\linewidth,trim=6 3 6 6,clip]{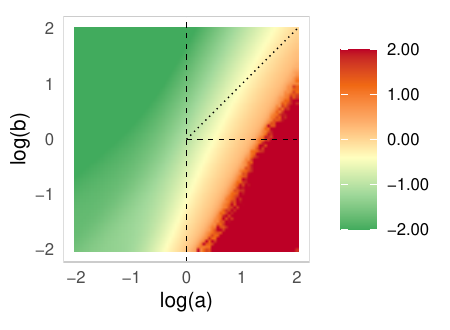}
		\caption{$n=250$, unconstrained}
	\end{subfigure}\hfill
	\begin{subfigure}{0.32\textwidth}
		\centering
		\includegraphics[width=\linewidth,trim=6 3 6 6,clip]{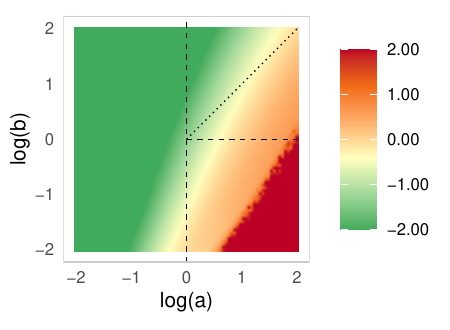}
		\caption{$n=2,500$, unconstrained}
	\end{subfigure}
	
	\vspace{0.3cm}
	
	\begin{subfigure}{0.32\textwidth}
		\centering
		\includegraphics[width=\linewidth,trim=6 3 6 6,clip]{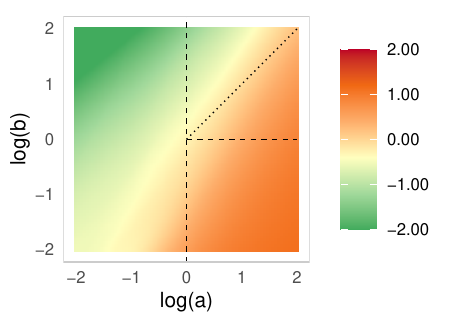}
		\caption{$n=25$, $(m_0,c)=(1,1)$}
	\end{subfigure}\hfill
	\begin{subfigure}{0.32\textwidth}
		\centering
		\includegraphics[width=\linewidth,trim=6 3 6 6,clip]{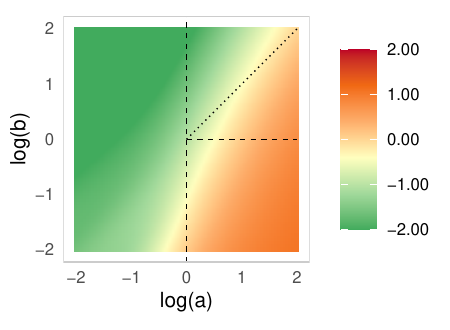}
		\caption{$n=250$, $(m_0,c)=(1,1)$}
	\end{subfigure}\hfill
	\begin{subfigure}{0.32\textwidth}
		\centering
		\includegraphics[width=\linewidth,trim=6 3 6 6,clip]{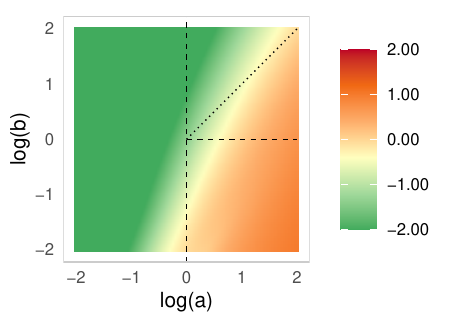}
		\caption{$n=2,500$, $(m_0,c)=(1,1)$}
	\end{subfigure}
	
	\vspace{0.3cm}
	
	\begin{subfigure}{0.32\textwidth}
		\centering
		\includegraphics[width=\linewidth,trim=6 3 6 6,clip]{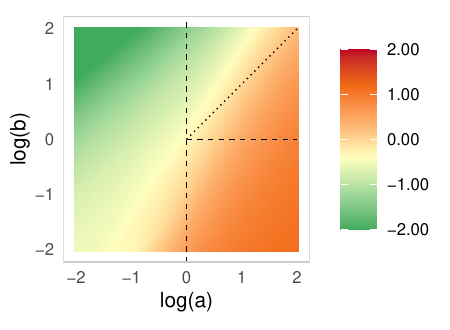}
		\caption{$n=25$, $(m_0,c)=(1,0.5)$}
	\end{subfigure}\hfill
	\begin{subfigure}{0.32\textwidth}
		\centering
		\includegraphics[width=\linewidth,trim=6 3 6 6,clip]{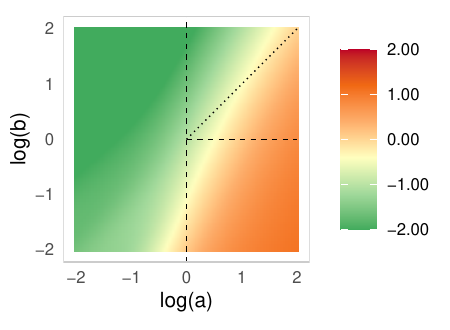}
		\caption{$n=250$, $(m_0,c)=(1,0.5)$}
	\end{subfigure}\hfill
	\begin{subfigure}{0.32\textwidth}
		\centering
		\includegraphics[width=\linewidth,trim=6 3 6 6,clip]{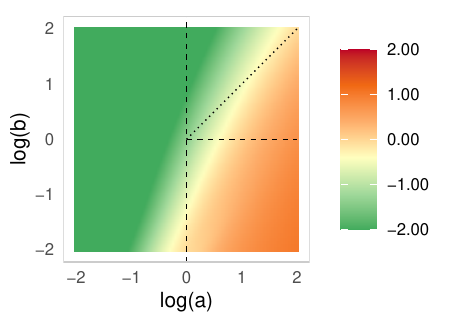}
		\caption{$n=2,500$, $(m_0,c)=(1,0.5)$}
	\end{subfigure}
	
	\caption{
		Heatmaps of $\log_{10}(\mathrm{MSE})$ of $\widehat{E}[u]$ over the parameter grid $(\log a,\log b)$. Columns correspond to sample sizes $n\in\{25,250,2\,500\}$. Rows correspond to the unconstrained estimator $(c=\infty)$ and the constrained estimators with $(m_0,c)=(1,1)$ and $(m_0,c)=(1,0.5)$. At each grid point, MSE is computed across the 150 Monte Carlo replications. The color scale is fixed at $[-2,2]$, with values outside this range clipped.
	}
	\label{fig:mse_heatmap}
\end{figure}


\begin{table}[!htb]
	\centering
	\footnotesize
	\setlength{\tabcolsep}{3.0pt}
	\begin{threeparttable}
		\caption{Bias and MSE by Region}
			\label{table:bias_mse}
		\begin{tabular}{@{}ll cccc cccc cccc@{}}
			\toprule
				&& \multicolumn{4}{c}{Median Bias} & \multicolumn{4}{c}{Median MSE} & \multicolumn{4}{c}{Mean MSE} \\
			\cmidrule(lr){3-6} \cmidrule(lr){7-10} \cmidrule(lr){11-14}
			$m_0$ & $c$ & High-NFM & Uni-R & Uni-L & Low-NFM & High-NFM & Uni-R & Uni-L & Low-NFM & High-NFM & Uni-R & Uni-L & Low-NFM \\
					\midrule
		\multicolumn{14}{@{}l}{\textit{Panel A: ($n=25$)}} \\
		- & $\infty$ & -0.094 & -0.203 & -0.027 & -1.650 & 0.174 & 0.953 & 41 & 10 & $>\!10^{34}$ & 2986 & $>\!10^{5}$ & $>\!10^{6}$ \\ 
		1 & $1.0$ & -0.094 & -0.377 & -1.384 & -2.371 & 0.165 & 0.363 & 2.107 & 5.857 & 0.273 & 0.426 & 2.559 & 6.026 \\ 
		1 & $0.5$ & -0.094 & -0.424 & -1.468 & -2.421 & 0.161 & 0.338 & 2.284 & 6.048 & 0.262 & 0.423 & 2.716 & 6.181 \\ 
		\midrule
		\multicolumn{14}{@{}l}{\textit{Panel B: ($n=250$)}} \\
		- & $\infty$ & -0.039 & -0.240 & -0.865 & 0.908 & 0.014 & 0.094 & 1.166 & 222 & 0.045 & 0.157 & $>\!10^{6}$ & $>\!10^{133}$ \\ 
		1 & $1.0$ & -0.039 & -0.241 & -1.046 & -1.779 & 0.014 & 0.093 & 1.174 & 3.350 & 0.045 & 0.156 & 1.538 & 3.987 \\ 
		1 & $0.5$ & -0.039 & -0.242 & -1.093 & -1.871 & 0.014 & 0.093 & 1.236 & 3.609 & 0.045 & 0.159 & 1.641 & 4.182 \\ 
		\midrule
		\multicolumn{14}{@{}l}{\textit{Panel C: ($n=2{,}500$)}} \\
		- & $\infty$ & -0.013 & -0.214 & -1.024 & -0.720 & 0.002 & 0.049 & 1.065 & 2.419 & 0.021 & 0.114 & 1.659 & $>\!10^{191}$ \\ 
		1 & $1.0$ & -0.013 & -0.214 & -1.026 & -1.483 & 0.002 & 0.049 & 1.065 & 2.294 & 0.021 & 0.114 & 1.369 & 3.005 \\ 
		1 & $0.5$ & -0.013 & -0.214 & -1.027 & -1.567 & 0.002 & 0.049 & 1.067 & 2.512 & 0.021 & 0.114 & 1.384 & 3.278 \\ 
		\bottomrule
		\end{tabular}
		\begin{tablenotes}[para,flushleft]
			\hspace*{-0.4em} Notes:
	Regions are defined by shape parameters $(a,b)$:
High-NFM, high near-frontier mass ($a<1$);
Uni-R, unimodal right-skewed ($1 \le a \le b$);
Uni-L, unimodal left-skewed ($1 \le b < a$);
Low-NFM, low near-frontier mass ($a \ge 1,\, b<1$).
For each grid point, bias and MSE are first computed across the 150 Monte Carlo replications.
The reported entries then aggregate these grid-point statistics across all grid points in the region.
Median Bias is the median across grid points of the bias of $\widehat{E}[u]$.
Median MSE is the median across grid points of the MSE of $\widehat{E}[u]$.
Mean MSE is the mean across grid points of the MSE of $\widehat{E}[u]$.
		\end{tablenotes}
	\end{threeparttable}
\end{table}

As the sample size increases and the moments are estimated more precisely, these failures become concentrated in regions where the moment-to-mean mapping is ill-conditioned: specifically, where the distribution of $u$ places almost all probability far from the frontier. The mean $E[u]$ depends on both the shape of the distribution and the support width $q$. When most probability is concentrated away from zero, the second through fourth moments mainly reflect the shape of the upper tail and provide limited information about $q$. The parametric beta family does not resolve this ambiguity because the induced scarcity mechanism implies that the realized distribution of $u$ is not exactly beta. As a result, the mapping can match the observed moments by shifting the fitted distribution along its support, increasing $q$ while preserving its shape, with little penalty, since there are no observations near the frontier to discipline the fit. In Figure~\ref{fig:mse_heatmap}, this appears as regions where the unconstrained estimator has very large MSE even at $n=2{,}500$.

The unconstrained estimator has the smallest bias in magnitude across regions and sample sizes (Table~\ref{table:bias_mse}, Median Bias column; see also Table~\ref{table:bind_absbias}, Median $|\text{Bias}|$ column).  However, its MSE can be orders of magnitude larger. Regularization mitigates these failures by requiring the fitted distribution to place a minimal probability mass in a neighborhood of zero. The constraint binds mainly where near-frontier mass is low and rarely where it is high (see Table~\ref{table:bind_absbias} in Appendix~\ref{ap:tables_figures}). For example, under $(m_0,c)=(1,1)$ at $n=2{,}500$, the bind share ranges from 0 in high near-frontier mass to 0.466 in low near-frontier mass. In regions where the constraint rarely binds, the constrained and unconstrained estimators have nearly identical median bias and median MSE. Where it binds, it reduces the extreme failures of the unconstrained mapping and sharply lowers mean MSE. At the same time, the constrained estimators generally exhibit more negative bias, especially in regions with little near-frontier mass. Thus the near-frontier mass constraint stabilizes point estimation at the cost of bias in regions where the true near-frontier mass falls short of the imposed threshold. The lower bound, which does not require recovering support width or distributional shape near zero, remains stable even where point estimation is weakly informative.



\section{Empirical Application to Production}\label{se:ap}

Consider the log-production model
\begin{align}
	y &= g(k -\omega_k, l -\omega_l) - \omega_y + v \label{eq:pf}
	= g(k, l) - u + v,
\end{align}
where $y$ is log output, $k$ is log capital,  $l$ is log labor, and $g(\cdot)$ is strictly increasing in each input.\footnote{If $g(\cdot)$ is CES, then one of $\omega_k$, $\omega_l$ or $\omega_y$ is redundant.} The unobservables $\omega_k$ and $\omega_l$ represent inefficient applications of capital and labor, respectively, while $\omega_y$ is a Hicks-neutral shock.\footnote{The Hicks-neutral shock $\omega_y$ can represent the effect of an unobserved input.} These three inefficiencies all have support $[0, \infty)$.  The total inefficiency in production is $u = g(k, l ) - [g(k - \omega_k, l - \omega_l) - \omega_y] \geq 0$, with $u=0$ if and only if the firm is efficient. The random error $v$ represents measurement error in $y$, including unpredictable random productivity shocks, and is assumed  to be mean independent of $(k,l)$.
	
Endogenous inputs are a pervasive challenge in estimating production functions. The most widely used approaches for estimating the conditional mean production function combine panel data with control functions, using a proxy variable (e.g., investment or intermediate inputs) that is monotone in unobserved productivity and exploiting timing and information-set assumptions to control for simultaneity \citep[see, e.g.,][]{olleypakes,levinsohn2003estimating,ackerberg2015identification}. In contrast, our frontier approach replaces proxy-variable and timing assumptions with bounded-above productivity ($u\geq 0$) and assignment at the frontier ($0 \in \supp(u)$) assumptions, which together identify the frontier relationship without instruments or proxies. Classical SFA methods also use the idea of a frontier, but typically proceed by imposing parametric structure on the frontier and distributional assumptions on inefficiency and random error; accommodating endogeneity is generally thought to require additional restrictions (e.g., controls or instruments).

To illustrate our approach, we use plant-level panel data and treat inefficiencies as time-invariant, as in \eqref{eq:model_fe1}--\eqref{eq:model_fe2}:
\begin{align*}
y_{it}=g(x_{it})-u_i+v_{it}, \qquad x_{it}:=(k_{it},l_{it}), \quad u_i\ge 0, \quad  t=1,\ldots,T_i, \quad i=1,\ldots,n.
\end{align*}
 In model \eqref{eq:pf} described above, inefficiency enters as $g(k - \omega_k, l - \omega_l)$ so under a flexible $g(\cdot)$, the implied output loss generally depends on contemporaneous $(k,l)$. This would make the inefficiency component time-varying. Accordingly, in the application we restrict inefficiency to be Hicks-neutral ($\omega_k=\omega_l=0$).  Following \citet{Mundlak1978}, we further restrict the dependence between inputs and unobservables in Assumption \ref{asn:v} to operate through $\bar x_i = T_i^{-1}\sum_{t=1}^{T_i} x_{it}$ (the time-averaged inputs): (i) $E[v_{it}\mid x_{it},\bar x_i]=0$; (ii) $	\mu_{k,v}(x_{it},\bar x_i) = \mu_{k,v}(\bar x_i)$, for $k\in\{2,3,4\}$; (iii) $u_i \mid (x_{i1},\ldots,x_{iT_i}) \overset{d}{=} u_i \mid \bar x_i $; (iv) $u_i \indep (v_{i1},\ldots,v_{iT_i}) \mid \bar x_i.$

These restrictions reduce the conditioning set from the full input history $(x_{i1},\ldots,x_{iT_i})$ to the low-dimensional summary $\bar x_i$, making nonparametric estimation of the conditional moments feasible. In principle, the identification strategy applies with conditioning on the full history; the Mundlak restriction is a practical dimension reduction.  In the repeated measurement model $y_{it}=g(x_i)-u_i+v_{it}$ of Appendix~\ref{ap:multi-meas}, neither restriction is needed: because inputs do not vary within plant,  inefficiency $g(k_i,l_i)-g(k_{i}-\omega_{k},l_{i}-\omega_{l})+\omega_{y}$ is time-invariant without imposing $\omega_k=\omega_l=0$, and $\bar x_i = x_i$ by construction, so the Mundlak assumption imposes no additional restriction beyond conditioning on $x_i$ itself.

Our data are plant-level observations from the Colombian manufacturing census for 1981--1991. We focus on the food products industry (ISIC 311), using the \texttt{colombian} dataset distributed with the \texttt{gnrprod} \texttt{R} package.\footnote{See \citet{gandhi2020identification} for discussion of this dataset and its use in production-function applications.} This application is intended to illustrate the method in a standard production setting, rather than to deliver a definitive structural account of Colombian manufacturing.

We set output to log real gross output ($y=\texttt{RGO}$), capital to log real capital stock ($k=\texttt{K}$), and labor to log labor input measured in employee-years ($l=\texttt{L}$), where all variables are expressed in logs and output and capital are deflated to constant prices. After excluding observations with fewer than 10 employee-years and then restricting the sample to plants observed for at least 8 years to ensure sufficient within-plant variation, we obtain an unbalanced panel of $n=408$ plants and $4,306$ plant-year observations, with an average duration of $10.6$ years per plant; $24$ plants are observed for $8$ years, $29$ for $9$ years, $52$ for $10$ years, and $303$ plants are present for the full $11$ years. In the sample, the mean log output is $9.21$ (s.d. $1.73$), with right-skewness of $0.20$ and kurtosis of $5.28$. Mean log labor is $4.05$ (s.d. $1.21$) and mean log capital is $7.37$ (s.d. $1.92$). 

For mean regression in such a panel setting, a fixed effects (within) estimator would be natural, since differencing removes the time-invariant inefficiency term and thereby eliminates bias from correlation between inefficiency and inputs. Our interest, however, lies in the frontier relationship. Under assignment at the frontier, correlation between inefficiency and inputs is not a concern for identification of the frontier. Fixed effects therefore addresses a problem that is absent under our identifying assumptions, while discarding the between-plant differences that are precisely the most informative about which plants operate closest to the frontier.

In production applications, fixed effects estimation also has a long record of producing implausible estimates, particularly for the capital coefficient, which can be attenuated toward zero \citep[e.g.,][]{griliches1998production}. One reason for this could be the limited within-plant variation in capital relative to between-plant variation; in our data, within-plant variance accounts for about 5\% of total capital variance and 9\% for labor. Moreover, consistency of the fixed effects estimator requires a strict exogeneity condition that productivity shocks are mean independent of the entire history of inputs. This rules out feedback from past productivity shocks to future input choices. By contrast, we maintain a weaker mean independence restriction that conditions only on contemporaneous inputs and their time averages, allowing for more general forms of dynamic adjustment in input choices.

We implement the three-stage estimator developed in Section~\ref{se:est} with the Mundlak assumptions above as follows. First, we nonparametrically estimate the conditional expectation $E[y_{it}\mid x_{it},\bar x_i]$ using a tensor-product thin-plate regression spline basis. To separate within-plant and between-plant variation, the specification includes main effects and interactions in the time-averaged inputs $\bar{x}_i$, as well as main effects and interactions in the demeaned inputs $x_{it}-\bar{x}_i$. We also include interactions between the within and between components. This provides a flexible first-stage approximation, while nesting the additive structure implied by the maintained model. We set each marginal basis dimension on the order of $n^{1/2},$ where $n$ is the number of plant-year observations, yielding a total of 132 basis functions across terms, and estimate the resulting high-dimensional linear model by ridge regression. The fitted values yield centered residuals $\hat \eps_{it}$, which we decompose into within-plant residuals $\hat \eps_{it}^w$ and plant-level residuals $\bar{\hat \eps}_i$ (the sample analogs of \eqref{eq:within_id} and \eqref{eq:between_id}).

Second, we estimate conditional central moments using the within-between decomposition. Using empirical central moments of $\hat \eps_{it}^w$, we obtain plant-level error central moments $\hat\mu_{k,v}(\bar x_i)$ using \eqref{eq:mu2vhat}--\eqref{eq:mu4vhat}, and smooth these as functions of $\bar x_i$ using ridge regression on the spline basis in $\bar x_i$. We then construct adjusted plant-level powers $\hat u_i^k$ using \eqref{eq:ui2}--\eqref{eq:ui4} in Appendix~\ref{ap:estdetails_sample} and obtain smoothed estimates of $\hat\mu_{k,u}(\bar x_i)$ for $k\in\{2,3,4\}$ by estimating these adjusted powers as functions of $\bar x_i$ by ridge regression.

Finally, we estimate the inefficiency distribution by matching the estimated central moments to model-implied central moments. For each plant, we consider two parametric families for $u\mid \bar x$: the scaled beta (as in our simulations) and truncated normal distributions (a generalization of the half-normal used in the original SFA literature, and often used in that literature now). We compute $\hat\theta(\bar x)$ by minimizing the distance between estimated and model-implied central moments subject to the near-frontier mass restriction in \eqref{eq:mm_constrained}. We set the constraint parameters to $(m_0,c)=(1,0.5)$, which ensures that the expected number of plants within $0.5$ standard deviations of the frontier is at least one.  In the input-dependent case, $\widehat n_{\mathrm{eff}}(\bar x)$ is computed using kernel weights (bandwidth $h=0.20$), yielding an average effective sample size of about $30$, and a standard deviation of $15$ across plants. Consequently, the required probability mass $p_0(\bar x) = 1/\widehat n_{\mathrm{eff}}(\bar x)$ adapts to the local data density, averaging approximately $0.03$, which provides stronger regularization in regions where the effective sample size is small.

We first consider the case where the distributions of $u_{i}$ and $v_{it}$ are independent of inputs. This corresponds to the Corrected Ordinary Least Squares (COLS) estimator, where the conditional central moments reduce to constants (and $g(x)=E[y\mid x]+E[u]$). Denote the estimate of the negative plant-level mean centered residual by $\hat{\eta}_i = -\bar{\hat{\eps}}_i = \reallywidehat{(u_i - E[u]) - \bar{v}_i}$ (an estimate of centered inefficiency contaminated by plant-averaged error). Figure~\ref{fig:density_zoom} plots the kernel density of the shifted estimate $\hat{\eta}_i-\min_j(\hat{\eta}_j)$, which sets the minimum value to zero to visualize the support relative to zero.

We estimate the central moments of the inefficiency $u$ by subtracting the estimated error central moments from the central moments of the between residuals (see \eqref{eq:mu2vhat_pool}--\eqref{eq:mu4uhat_pool} in~Appendix \ref{ap:homo}). The resulting central moment estimates are $\widehat\mu_{2,u} \approx 0.59$, $\widehat\mu_{3,u} \approx 0$, and $\widehat\mu_{4,u} \approx 1.09$. These imply a skewness of $0$ and kurtosis of $3.13$. The small skewness and kurtosis close to 3 are consistent with an approximately symmetric distribution.

We fit a scaled beta distribution $q\cdot\mathrm{Beta}(a,b)$ and a truncated normal distribution $TN(\mu,\sigma^2)$ on $[0,\infty)$ to these estimated central moments. The unconstrained fitted densities are shown as dashed lines in Figure~\ref{fig:density_zoom}. These fitted densities reproduce the overall shape of the empirical kernel density (shifted so minimum value is zero) but imply implausibly large mean inefficiency: $\widehat E[u]\approx 20.87$ (beta) and $\widehat E[u]\approx 4.80$ (truncated normal). 

This motivates the near-frontier mass restriction \eqref{eq:mm_constrained} as a regularizer. The solid lines in the figures impose $(m_0,c)=(1,0.5)$, ensuring that the expected number of plants within approximately $ \widehat\sigma_u/2 \approx 0.38$ log points of the frontier is at least one. Equivalently, with $\widehat n_{\mathrm{eff}}=n=408$ this requires at least $p_0=m_0/n=1/408\approx 0.002$ probability mass within $0.5$ standard deviations of zero, ruling out fits that place essentially no mass near the frontier. This results in $\widehat E[u]\approx 2.30$ (beta) and $\widehat E[u]\approx 2.53$ (truncated normal).

\begin{figure}[!t]
	\centering
	\includegraphics[width=0.55\textwidth]{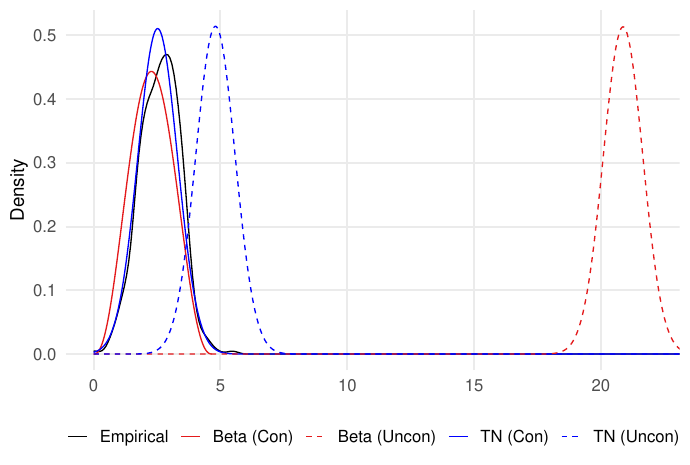}
	\caption{
			 Empirical and fitted densities of estimated inefficiency. The solid black line shows the kernel density (minimum shifted to zero) of $\hat{\eta}_i-\min_j(\hat{\eta}_j)$. Dashed lines show unconstrained fits and solid lines constrained fits: red for beta and blue for truncated normal.}
	\label{fig:density_zoom}
\end{figure}

We now turn to estimation allowing the distributions of $u$ and $v$ to depend on inputs through the Mundlak controls $\bar x_i$. Following the within--between decomposition described above, we obtain plant-level estimates of the conditional central moments of both components. Across plants, the estimated error second central moment has mean $0.15$ (s.d.\ $0.08$), the third central moment has mean $0.02$ (s.d.\ $0.01$), and the fourth central moment has mean $0.21$ (s.d.\ $0.11$). The estimated inefficiency second central moment has mean $0.61$ (s.d.\ $0.24$), the third central moment has mean $0.00$ (s.d.\ $0.01$), and the fourth central moment has mean $1.08$ (s.d.\ $0.29$).

We report the coefficients from regressions of estimated mean inefficiency on plant-level mean inputs (Table~\ref{table:estimates_combined}, Panel~A). The coefficient on labor is $-0.39$ (beta) and $-0.39$ (truncated normal), and the coefficient on capital is $0.20$ (beta) and $0.14$ (truncated normal). The positive coefficient on capital suggests that estimated mean inefficiency is correlated with capital, in contrast with the control-function approach of \cite{olleypakes}, which assumes that the proxy variable (e.g., investment) is strictly increasing in productivity conditional on capital. Under this strict monotonicity, persistently more productive plants will have greater capital.

\begin{table}[!t]
	\centering
		\footnotesize
\setlength{\tabcolsep}{2.5pt}
	\caption{Production function and inefficiency regressions}
	\label{table:estimates_combined}
	\begin{threeparttable}
		\begin{tabular}{l c c c}
			\toprule
			& Labor ($l$) & Capital ($k$) & E/C Ratio \\
			\midrule
			\multicolumn{4}{l}{\textit{Panel A: Inefficiency}} \\
			MM (beta) & -0.39 & 0.20 & \\
			& (0.07) & (0.05) & \\
			MM (truncated normal) & -0.39 & 0.14 & \\
			& (0.05) & (0.03) & \\
			\midrule
			\multicolumn{4}{l}{\textit{Panel B: Production function}} \\
			OLS (within) & 0.27 & 0.20 & 4.62 \\
			& (0.02) & (0.01) & \\
			Levinsohn-Petrin & 0.12 & 0.15 & 2.68 \\
			& (0.02) & (0.05) & \\
			SFA (truncated normal) & 0.32 & 0.25 & 4.34 \\
			& (0.01) & (0.01) & \\
			\addlinespace
			MM (beta) & 0.21 & 0.61 & 1.15 \\
			& (0.05) & (0.04) & \\
			MM (truncated normal) & 0.20 & 0.56 & 1.22 \\
			& (0.04) & (0.03) & \\
			\bottomrule
		\end{tabular}
	\end{threeparttable}
	
	\vspace{2pt}
	\begin{minipage}{\textwidth}
		\footnotesize
		\textit{Notes:} Panel A reports regressions of estimated mean inefficiency on plant-level mean inputs. Panel B reports production function estimates. OLS and Levinsohn-Petrin estimate the mean production function, while SFA and Method of Moments (MM) estimate the frontier production function. Standard errors (in parentheses) for MM and Levinsohn-Petrin are calculated using 100 plant-level bootstrap samples, while those for OLS and SFA use analytic formulas. The E/C Ratio column reports the relative elasticity--cost share ratio, calculated as $0.05 \times (\beta_l / \beta_k) / (\bar{L} / \bar{K})$, assuming a capital rental rate of 5\%.
	\end{minipage}
\end{table}

We next compare production-function elasticity estimates across procedures (Table~\ref{table:estimates_combined}, Panel~B). For the mean production function, OLS (within) regresses demeaned output on demeaned inputs, yielding elasticities of $0.27$ on labor and $0.20$ on capital, while the Levinsohn--Petrin estimator yields $0.12$ on labor and $0.15$ on capital. For the frontier, the parametric SFA model with truncated normal time-invariant inefficiency yields elasticities of $0.32$ on labor and $0.25$ on capital.\footnote{Under the OLS and SFA assumptions that inefficiency and error are mean independent of inputs ($E[u_i\mid x_{i1},\ldots,x_{iT}]=E[u_i]$ and $E[v_{it}\mid x_{i1},\ldots,x_{iT}]=0$), we have $E[y_{it}\mid x_{i1},\ldots,x_{iT}]=g(x_{it})-E[u_i]$, so the frontier differs from the conditional mean by a constant shift, and one would expect identical slope coefficients across mean and frontier regressions. The estimates differ in Table~\ref{table:estimates_combined} because OLS identifies $\beta$ from within-plant variation only, whereas the panel SFA estimator exploits total variation and estimates parameters by maximum likelihood. Using comparable estimators, cross-sectional SFA yields slope coefficients and intercepts nearly identical to cross-sectional OLS, implying negligible estimated inefficiency. This is consistent with diagnostic tests suggesting that standard SFA is unable to distinguish inefficiency from error in this setting.} Using our moment-based approach, we estimate $\widehat g(\bar x_i)=\widehat E[y\mid \bar x_i]+\widehat E[u_i\mid \bar x_i]$ and obtain frontier elasticities by regressing $\widehat g$ on labor and capital. The estimated frontier elasticities are $0.21$ (beta) and $0.20$ (truncated normal) for labor, and $0.61$ (beta) and $0.56$ (truncated normal) for capital. The MM frontier estimates imply decreasing returns to scale, with coefficient sums of 0.82 (beta) and 0.76 (truncated normal), which are substantially larger than the corresponding OLS and Levinsohn–Petrin estimates. The elasticity--cost share ratio (E/C Ratio) compares the relative labor--capital elasticity to the corresponding relative input usage; under neoclassical price-taking behavior with unrestricted input choice, this ratio equals one. In our estimates, the implied ratio ranges from $1.15$ (MM (beta)) and $1.22$ (MM (truncated normal)) to $4.62$ (OLS).
 
Finally, we report the unconditional lower bound and mean inefficiency estimates. The estimated lower bound on mean inefficiency is $0.76$. Our MM estimates of mean inefficiency (averaged over plants) are $1.43$ (beta) and $1.67$ (truncated normal). For comparison, the SFA model (assuming time-invariant inefficiency) yields a higher mean inefficiency of $2.75$, while COLS yields $2.30$ (beta) and $2.53$ (truncated normal). Mean inefficiency estimates are sensitive to whether one allows $(u_i,v_{it})$ to depend on inputs. In our data, the MM estimates that allow such dependence are lower than input-independent benchmarks, consistent with the possibility that input-independent methods overestimate inefficiency by attributing heterogeneity or input--error correlations to technical inefficiency.

To visualize the estimated structural frontier $g(x)$ against the data, Figure~\ref{fig:marginal_slice} plots slices of the production surface. Each slice shows output against one input, restricting the other input to lie within a bandwidth around its sample median. The plotted curves represent smoothed estimates of the frontier, conditional mean, and 95\% quantile regression. For example, Panel~(a) plots log output against log capital for plants with log labor near the sample median. The slices show a broadly concave frontier consistent with diminishing marginal returns to capital. The estimated frontier lies near the 95\% conditional percentile and is slightly more concave than the mean estimates.

\enlargethispage{\baselineskip}
\begin{figure}[!htbp]
	\centering
	\begin{subfigure}{0.48\textwidth}
		\centering
		\includegraphics[width=\linewidth,trim=2 3 6 6,clip]{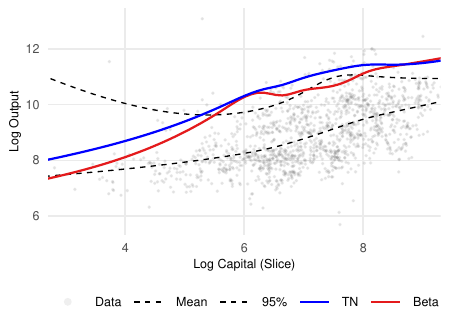}
		\caption{Slice: Capital ($l=3.84 \approx \text{median}$)}
	\end{subfigure}\hfill
	\begin{subfigure}{0.48\textwidth}
		\centering
		\includegraphics[width=\linewidth,trim=2 3 6 6,clip]{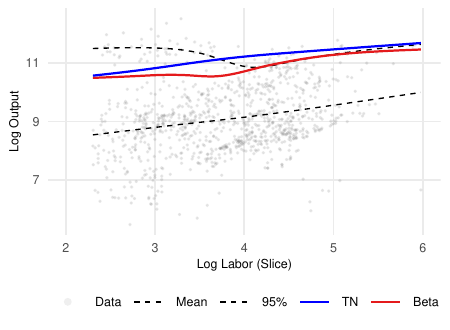}
		\caption{Slice: Labor ($k=7.40 \approx \text{median}$)}
	\end{subfigure}
	\caption{ 
		Panel (a) plots output against capital near the median labor input; Panel (b) plots output against labor near the median capital input. Red (beta) and blue (truncated normal) curves show the smoothed frontier; black dashed curves show the conditional mean and 95th percentile; gray points are raw data.
	}	
	\label{fig:marginal_slice}
\end{figure}

\FloatBarrier
\section{Conclusion}  \label{se:con}

This paper identifies the frontier structural function from the supremum of the outcome given inputs, assuming a nonnegative deviation and assignment at the frontier (that zero lies in the deviation’s support given inputs). This identification holds even when inputs are endogenous, thereby obviating the need for instrumental variables. We then allow for random error that is mean independent of inputs. Using only observed inputs and outcomes, we develop estimators of the frontier structural function and mean deviation based on conditional central moments.  We regularize the moment-matching step by imposing a near-frontier mass constraint that the fitted deviation distribution retain a minimum amount of probability mass near the frontier. We also derive a skewness-based lower bound on the mean deviation that is robust to scarcity of data near the frontier. Monte Carlo simulations show that the near-frontier mass constraint substantially improves finite-sample performance by reducing mean squared error, while the skewness-based lower bound remains accurate even when near-frontier observations are scarce.

In an application to the Colombian food products industry, estimated mean inefficiency is correlated with inputs. There are a number of possible explanations for this pattern, including that plants choose inputs based on productivity, that firms face different input prices according to productivity, or that inefficiency in input choice is correlated with inefficiency in input application; distinguishing among these explanations would require additional information or stronger structure. The regularized estimator yields fitted distributions that capture the overall shape of the empirical distribution while satisfying the near-frontier mass constraint, and the skewness-based lower bound provides an optimization-free alternative to these point estimates.

\begingroup
\setstretch{1.30}
\small
\bibliography{biblio}
\endgroup

\pagebreak


\appendix

\noindent {\LARGE \bf Appendix}


\section{Assignment at the Frontier and Near-Frontier Mass Constraint}\label{ap:AAF}

Consider sequences $\delta_n \downarrow 0$ and $p_n>0$ and the restrictions $F_{u|x}(\delta_n) \ge p_n$.

\begin{prop}\label{prop:seq_to_support}
	If there exist sequences $\delta_n \downarrow 0$ and $p_n>0$ such that
	$F_{u|x}(\delta_n)\ge p_n$ for all $n$, then $0\in \Supp(u\mid x)$.
\end{prop}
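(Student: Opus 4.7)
The plan is to argue directly from the definition of support. Recall that $0\in \Supp(u\mid x)$ if and only if every open neighborhood of $0$ has strictly positive conditional probability. Since $u\ge 0$, the restriction on neighborhoods of zero reduces to a one-sided condition: it suffices to show that $F_{u\mid x}(\delta)=\Pr(u\le \delta\mid x)>0$ for every $\delta>0$.

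The key step is to fix an arbitrary $\delta>0$ and exploit the hypothesis that $\delta_n\downarrow 0$. By convergence, there exists an index $N$ with $\delta_N\le \delta$. By monotonicity of the conditional CDF and the assumed inequality,
\[
F_{u\mid x}(\delta)\ \ge\ F_{u\mid x}(\delta_N)\ \ge\ p_N\ >\ 0.
\]
Since $\delta>0$ was arbitrary, this establishes $F_{u\mid x}(\delta)>0$ for all $\delta>0$, which is exactly the condition for $0\in \Supp(u\mid x)$ given that $u$ is nonnegative.

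There is no real obstacle here: the argument is a one-line consequence of monotonicity of CDFs together with the fact that $\delta_n$ can be chosen arbitrarily small. The only point worth flagging is the mild subtlety between open and closed neighborhoods, but since the threshold $p_N$ is strictly positive and $\delta_N\le \delta$ (rather than $<\delta$), monotonicity yields a strict inequality either way. Hence the proposition follows from the displayed chain above.
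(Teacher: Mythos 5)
Your proof is correct and follows essentially the same argument as the paper's: fix an arbitrary $\delta>0$, pick an index $N$ with $\delta_N\le\delta$, and conclude $F_{u\mid x}(\delta)\ge F_{u\mid x}(\delta_N)\ge p_N>0$ by monotonicity of the conditional CDF. The remark that nonnegativity of $u$ reduces the two-sided neighborhood condition to the one-sided CDF condition is a fine (implicit in the paper) clarification, but the substance is identical.
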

\vspace{-0.2cm}
\begin{proof}
	Fix any $\varepsilon>0$. Since $\delta_n \downarrow 0$, there exists an integer $n(\varepsilon)$ such that
	$\delta_{n(\varepsilon)}<\varepsilon$. By monotonicity of $F_{u|x}$, we have
	$
	F_{u|x}(\varepsilon)\ \ge\ F_{u|x}\!\big(\delta_{n(\varepsilon)}\big)\ \ge\ p_{n(\varepsilon)}\ >\ 0.
	$
	Since this holds for all $\varepsilon>0$, it follows that $0\in \Supp(u\mid x)$.
\end{proof}
\vspace{-0.2cm}
\begin{prop} \label{prop:support_to_seq}
	If $0\in \Supp(u\mid x)$, then for any sequence $\delta_n\downarrow 0$,
	the choice $p_n := F_{u|x}(\delta_n)$ satisfies $p_n>0$ for all $n$ and
	$F_{u|x}(\delta_n)= p_n$ holds. Moreover $p_n \downarrow F_{u|x}(0)$.
\end{prop}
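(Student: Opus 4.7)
The plan is to dispatch the three claims of Proposition \ref{prop:support_to_seq} in turn, using only the definition of the support of a probability measure and standard properties of conditional distribution functions. The identity $F_{u|x}(\delta_n)=p_n$ is tautological given the definition $p_n:=F_{u|x}(\delta_n)$, so the substantive content reduces to (i) $p_n>0$ for all $n$ and (ii) $p_n\downarrow F_{u|x}(0)$.

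For positivity, I would apply the definition of support directly. Since $0\in\Supp(u\mid x)$, every open neighborhood of $0$ has strictly positive conditional probability. Applying this to the open set $(-\delta_n,\delta_n)$ for each $n$, and using $u\ge 0$ from \eqref{eq:model2} to conclude $\{u\in(-\delta_n,\delta_n)\}=\{0\le u<\delta_n\}\subseteq\{u\le\delta_n\}$, I would obtain
\begin{align*}
p_n\ =\ F_{u|x}(\delta_n)\ \ge\ \Pr\!\big(u\in(-\delta_n,\delta_n)\mid x\big)\ >\ 0.
\end{align*}
This is precisely the converse of the monotonicity step used in Proposition \ref{prop:seq_to_support}: instead of exploiting monotonicity of $F_{u|x}$ to pull positivity upward, here I exploit $u\ge 0$ to squeeze the open neighborhood into the half-line $(-\infty,\delta_n]$.

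For the monotone convergence, I would invoke two standard facts: monotonicity of $F_{u|x}$, which yields that $\delta_n\downarrow 0$ implies $p_n$ is non-increasing and bounded below by $F_{u|x}(0)$; and right-continuity of $F_{u|x}$ at $0$, which yields $\lim_{\delta\downarrow 0}F_{u|x}(\delta)=F_{u|x}(0)$. Combining these for the specific sequence $\delta_n\downarrow 0$ gives $p_n\downarrow F_{u|x}(0)$.

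The hardest part: there isn't really a substantive obstacle. The proposition is a definitional unpacking serving as a partial converse to Proposition \ref{prop:seq_to_support}, and the proof reduces to two one-line arguments. The only minor care needed is the translation from the open-neighborhood formulation of support to the half-line formulation used in the CDF, which relies on nonnegativity of $u$; and noting that right-continuity rather than continuity of $F_{u|x}$ at $0$ is what delivers the limit $F_{u|x}(0)$ (with equality, not a strict bound), even when $F_{u|x}$ has a point mass at $0$.
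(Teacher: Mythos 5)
Your proposal is correct and follows essentially the same route as the paper: positivity of $p_n$ from the support assumption (the paper states this directly, you unpack it via the open-neighborhood definition), and the limit from monotonicity plus right-continuity of $F_{u|x}$ at $0$. No gaps.
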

\vspace{-0.2cm}
\begin{proof}
	By the support assumption, $p_n=F_{u|x}(\delta_n)>0$ for all $n$ since $\delta_n>0$.
	Since $\delta_n\downarrow 0$ and $F_{u|x}$ is nondecreasing and right-continuous,
	$p_n=F_{u|x}(\delta_n)\downarrow F_{u|x}(0)$.
\end{proof}
Proposition~\ref{prop:seq_to_support} guarantees existence of shrinking-neighborhood inequalities, but it does not guarantee that an arbitrary pre-specified pair $(\delta_n\downarrow 0,p_n>0)$ will satisfy $F_{u|x}(\delta_n)\ge p_n$ for large $n$. Proposition~\ref{prop:support_to_seq} only asserts $F_{u|x}(\varepsilon)>0$ for every $\varepsilon>0$; it does not quantify how small $F_{u|x}(\varepsilon)$ may be as $\varepsilon\downarrow 0$.
For example, define $F(t)=\exp(-1/t)$ for $t>0$ and $F(t)=0$ for $t\le 0$. This satisfies $F(\varepsilon)>0$ for every $\varepsilon>0$, so $0\in \Supp(u)$, but the sequences $\delta_n=1/n^2$ and $p_n=1/n$ violate $F(\delta_n)\ge p_n$ for large $n$ since $F(\delta_n)=e^{-n^2}\ll 1/n=p_n$. In estimation, we restrict attention to parametric families with polynomial tails.

\section{The Primary Hankel Matrix} \label{ap:primaryhankel}

Define the central moments $\mu_j=E[(u - E[u])^j]$ and  Hankel matrix of central moments
{\small\begin{align*}
		\widetilde{\Delta}_n^{(0)} := \begin{pmatrix}
			1 & 0 & \mu_2 & \cdots & \mu_n\\
			0 & \mu_2 & \mu_3 & \cdots & \mu_{n+1}\\
			\mu_2 & \mu_3 & \mu_4 & \cdots & \mu_{n+2}\\
			\vdots & \vdots & \vdots & \ddots & \vdots\\
			\mu_n & \mu_{n+1} & \mu_{n+2} & \cdots & \mu_{2n}
		\end{pmatrix}.
	\end{align*}
}We show that $\det(\Delta_n^{(0)}) = \det(\widetilde{\Delta}_n^{(0)})$ by writing $(u - m_1)^j$ via the binomial theorem and noting that this results in a linear transformation between raw and central moments preserving the determinant.  
Specifically, define 
$
U = (1,u,u^2,\ldots,u^n)^\top$ and $
W = \bigl(1,(u - m_1),(u - m_1)^2,\ldots,(u - m_1)^n\bigr)^\top.
$
Then $\Delta_n^{(0)} = E[UU^\top]$ and $\widetilde{\Delta}_n^{(0)} = E[WW^\top]$.  Because $(u - m_1)^k = \sum_{j=0}^k \binom{k}{j}(-m_1)^{k-j} u^j,$ we obtain $W = AU$, where $A$ is the nonsingular lower-triangular matrix
{\small
	\begin{align*}
		A=
		\begin{pmatrix}
			1 & 0 & 0 & \cdots & 0 \\
			\binom{1}{0}(-m_1)^1 & 1 & 0 & \cdots & 0 \\
			\binom{2}{0}(-m_1)^2 & \binom{2}{1}(-m_1)^1 & 1 & \cdots & 0 \\
			\vdots & \vdots & \vdots & \ddots & \vdots \\
			\binom{n}{0}(-m_1)^n & \binom{n}{1}(-m_1)^{n-1} & \binom{n}{2}(-m_1)^{n-2} & \cdots & 1
		\end{pmatrix}.
	\end{align*}
}Hence,
$
\widetilde{\Delta}_n^{(0)} 
=
E[WW^\top] 
= 
AE[UU^\top]A^\top 
=
A\Delta_n^{(0)}A^\top.
$
So
$\det(\widetilde{\Delta}_n^{(0)}) = \det(\Delta_n^{(0)})(\det A)^2 = \det(\Delta_n^{(0)})$.

The primary Hankel matrices $\Delta_n^{(0)}$, via their relationship to the central moment matrices $\widetilde{\Delta}_n^{(0)}$ and their nonnegative determinants, yield inequalities in central moments. For example,
\vspace{-0.05cm}
{\footnotesize \begin{align*}
		0 &\leq  \mu_2, \quad
		0 \leq \mu_2\mu_4 - \mu_3^2 - \mu_2^3, \quad
		0 \leq \mu_3^4 - \mu_3^2(3\mu_2\mu_4 +\mu_6) + \mu_2^2(\mu_4^2 + 2\mu_3\mu_5)   - \mu_4^3+ \mu_4(2\mu_3\mu_5+\mu_2\mu_6) -\mu_2\mu_5^2 -\mu_2^3\mu_6  .
	\end{align*}
}The first inequality states that the variance of $u$ is nonnegative, so $u$ is nondegenerate. The second gives a well-known relationship between kurtosis and skewness:
${\mu_4}/{ \mu_2^2} >( {\mu_3^2}/{ \mu_2^{3/2}})^2 +1.$

\section{Moment Identities and Estimation Details} \label{ap:estdetails}

This appendix collects the moment identities and sample formulas used in Section~\ref{se:est}. We work under the panel setup and notation introduced there. For more details on the derivations see \cite{BenMosheGenesoveMoments}. 

\subsection{Within Identities and Error Central Moments}

Using the conditional mutual independence of $v_{i1},\ldots,v_{iT}$ given the inputs (Assumption~\ref{asn:v}(iii)) and the definition of $\eps_{it}^w$, algebra
yields the within moment relationships
\begin{align}
	\mu_{2,\eps^w}(x) &= \frac{T-1}{T} \mu_{2,v}(x), 	 \label{eq:mu2v} \\
	\mu_{3,\eps^w}(x) &= \frac{(T-1)(T-2)}{T^2}\mu_{3,v}(x), 	\label{eq:mu3v}\\
	\mu_{4,\eps^w}(x) &= \frac{(T-1)(T^2-3T+3)}{T^3}\mu_{4,v}(x)
	+ 3\frac{(T-1)(2T-3)}{T^3}\bigl(\mu_{2,v}(x)\bigr)^2. \label{eq:mu4v}
\end{align}
These can be inverted to express $\mu_{k,v}(x)$, $k=2,3,4$, in terms of
$\mu_{k,\eps^w}(x)$.

\subsection{Between Identities and Deviation Central Moments}

Using the conditional independence of $u_i$ and $\{v_{it}\}_{t=1}^T$ given the inputs (Assumption~\ref{asn:v}(ii)), the between residual $\bar \eps_i = -(u_i-E[u_i \mid  x) + \bar v_i$ yields
\begin{align}
	\mu_{2,\bar \eps}(x)
	&= \mu_{2,u}(x) + \frac{\mu_{2,v}(x)}{T}, \label{eq:mu2u}
	\\
	\mu_{3,\bar \eps}(x)
	&= -\mu_{3,u}(x) + \frac{\mu_{3,v}(x)}{T^2}, \label{eq:mu3u}
	\\
	\mu_{4,\bar \eps}(x)
	&= \mu_{4,u}(x)
	+ 6\,\mu_{2,u}(x)\,\frac{\mu_{2,v}(x)}{T}
	+ \frac{\mu_{4,v}(x)}{T^3}
	+ 3\frac{T-1}{T^3}\bigl(\mu_{2,v}(x)\bigr)^2. \label{eq:mu4u}
\end{align}
Substituting $\mu_{k,v}(x)$ from \eqref{eq:mu2v}--\eqref{eq:mu4v} into \eqref{eq:mu2u}--\eqref{eq:mu4u} expresses the deviation central moments $\mu_{k,u}(x)$ in terms of $\mu_{k,\eps^w}(x)$ and $\mu_{k,\bar \eps}(x)$.

\subsection{Sample Formulas for Conditional Central Moments}
\label{ap:estdetails_sample}

In Section~\ref{se:est}, the conditional mean $E[y \mid x]$ is estimated by a nonparametric regression, and residuals are obtained. 
Then, for each firm, sample analogs of the within identities \eqref{eq:mu2v}--\eqref{eq:mu4v} yield estimators of the conditional error central moments at $ x$:
\begin{align}
	\widehat\mu_{2,v}(x)
	&= \frac{1}{T_i-1}\sum_{t=1}^{T_i} (\hat \eps_{it}^w)^2, \label{eq:mu2vhat}\\
	\widehat\mu_{3,v}(x)
	&= \frac{T_i}{(T_i-1)(T_i-2)}\sum_{t=1}^{T_i} (\hat \eps_{it}^w)^3, \label{eq:mu3vhat}\\
	\widehat\mu_{4,v}(x)
	&= \frac{ (T_i^3-2T_i^2-3T_i+9)\sum_{t=1}^{T_i} (\hat \eps_{it}^w)^4 - 6(2T_i-3)\sum_{1\le t<t'\le T_i} (\hat \eps_{it}^w)^2(\hat \eps_{it'}^w)^2 }{T_i(T_i-1)(T_i-2)(T_i-3)}. \label{eq:mu4vhat}\\
	\widehat{\mu_{2,v}^2}(x) &= \frac{ 2(T_i^2-3T_i+3)\sum_{1\le t<t'\le T_i} (\hat \eps_{it}^w)^2(\hat \eps_{it'}^w)^2 - (2T_i-3)\sum_{t=1}^{T_i} (\hat \eps_{it}^w)^4 }{T_i(T_i-1)(T_i-2)(T_i-3)}. \label{eq:mu22vhat}
\end{align}

The between identities \eqref{eq:mu2u}--\eqref{eq:mu4u} relate the central moments of $\bar \eps_i$ to those of $u_i$ and $v_{it}$ and are used to subtract the
contribution of $\bar v_i$ from the between residual. We construct adjusted between--residual powers sequentially, substituting previously estimated moments as they become available. First, using the second and third between identities \eqref{eq:mu2u}--\eqref{eq:mu3u},
\begin{align}
	\widehat u_{i}^{2} &= (\bar{\hat \eps}_i)^2 - \frac{\widehat\mu_{2,v}(x)}{T_i}, \label{eq:ui2}\\
	\widehat u_i^{3} &= -(\bar{\hat \eps}_i)^3 + \frac{\widehat\mu_{3,v}(x)}{T_i^2}.\label{eq:ui3}
\end{align}
These are nonparametrically regressed on inputs to obtain $\widehat\mu_{2,u}(x)$ and $\widehat\mu_{3,u}(x)$. Next, using the fourth between identity \eqref{eq:mu4u} and substituting $\widehat\mu_{2,u}(x)$ from the previous step,
\begin{align}
	\widehat u_i^{4} &= (\bar{\hat \eps}_i)^4
	- 6\,\widehat\mu_{2,u}(x)\,\frac{\widehat\mu_{2,v}(x)}{T_i}
	- \frac{\widehat\mu_{4,v}(x)}{T_i^3}
	- 3\frac{T_i-1}{T_i^3}\,\widehat{\mu_{2,v}^2}(x). \label{eq:ui4}
\end{align}
This is regressed on inputs to obtain $\widehat\mu_{4,u}(x)$. By construction, the conditional expectations of these adjusted powers satisfy $E[\widehat u_i^{k} \mid x] = \mu_{k,u}(x)$ for $k=2,3,4$.

\section{Alternative Data Structures} \label{ap:datastruct}


Section~\ref{se:est} considers estimation with panel data with time-invariant deviations. This appendix discusses alternative data structures: repeated measurements, panel data in which the distributions of $u_i$ and $v_{it}$ are independent of $x$, and a cross-sectional setting in which identification relies on symmetry restrictions on $v$ rather than on within–between variation in panel data.

\subsection{Repeated Measurements} \label{ap:multi-meas}
Suppose that for each firm $i$ we observe $T$ repeated input values,
$
x_{i1} = \cdots = x_{iT} = x_i.
$
After subtracting the conditional mean,
$
\eps_{it} = y_{it} - E[y_{it}\mid x_i]
= -(u_i-E[u_i \mid  x_i]) + v_{it},
$
the between and within quantities in \eqref{eq:within_id}--\eqref{eq:between_id} are simply computed conditional on $x_i$ (here $\bar x_i = x_i$), and the identities \eqref{eq:mu2v}--\eqref{eq:mu4v} and \eqref{eq:mu2u}--\eqref{eq:mu4u} hold at each $x_i$.

\subsection{Repeated Measurements with Discrete $x$}
If $x_i$, further, takes only finitely many values, the repeated measurement setting above applies for each value. The central moments of the deviation and the random error can be estimated separately for each value $\{i : x_i = x\}$ by applying the relationships \eqref{eq:mu2v}--\eqref{eq:mu4v} and \eqref{eq:mu2u}--\eqref{eq:mu4u} conditional on $x$. For example, \citet{BenMosheGenesoveRegulation} exploit a hierarchical setting in which multiple apartment sales within the same building provide repeated measurements of the equilibrium price at a given building location.

\subsection{Panel Data with $u$ and $v$ Independent of $x$} \label{ap:homo}

A further special case of the panel-data setting in the main text arises when the distributions of $u_i$ and $v_{it}$ do not depend on $x_{it}$. In this case, $u_i$ and $v_{it}$ are identically distributed across firms and time periods, and the identities \eqref{eq:mu2v}--\eqref{eq:mu4v} and \eqref{eq:mu2u}--\eqref{eq:mu4u} hold unconditionally. We can therefore pool information across firms and time periods to estimate the (unconditional) central moments of the random error and the deviation by averaging the corresponding firm-level estimators.
{\small
	\begin{align}
		\widehat\mu_{2,v} 
		&= \left(\sum_{i=1}^n (T_i-1)\right)^{-1}\sum_{i=1}^n\sum_{t=1}^{T_i} (\hat \eps_{it}^w)^2, \label{eq:mu2vhat_pool} \qquad 
		\widehat\mu_{3,v} 
		= \left(\sum_{i=1}^n \frac{(T_i-1)(T_i-2)}{T_i}\right)^{-1}\sum_{i=1}^n\sum_{t=1}^{T_i} (\hat \eps_{it}^w)^3, \\ 
		\widehat\mu_{2,u} &= \frac{1}{n-1} \sum_{i=1}^n (\bar {\hat \eps}_i )^2 - \frac{\widehat\mu_{2,v}}{n} \sum_{i=1}^n \frac{1}{T_i}, \label{eq:mu2uhat_pool} \qquad 
		\widehat\mu_{3,u} = -\frac{n}{(n-1)(n-2)} \sum_{i=1}^n (\bar {\hat \eps}_i )^3 + \frac{\widehat\mu_{3,v}}{n} \sum_{i=1}^n \frac{1}{T_i^2}, \\ 
		\widehat\mu_{4,v} 
		&= \frac{
			\Big( \sum_{i=1}^n \frac{(T_i-1)(T_i^3-2T_i^2-3T_i+9)}{T_i^2} \Big)  \sum_{i,t}(\hat\eps_{it}^w)^4 - 6\Big(\sum_{i=1}^n \frac{(T_i-1)(2T_i-3)}{T_i^2}\Big) \sum_{i}\sum_{t<t'}(\hat\eps_{it}^w)^2(\hat\eps_{it'}^w)^2
		}{
			\Big(\sum_{i=1}^n \frac{(T_i-1)(T_i^2-3T_i+3)}{T_i^2}\Big) \Big(\sum_{i=1}^n \frac{(T_i-1)(T_i^3-2T_i^2-3T_i+9)}{T_i^2}\Big) - 3\Big(\sum_{i=1}^n \frac{(T_i-1)(2T_i-3)}{T_i^2}\Big)^2
		}, \label{eq:mu4vhat_pool}\\[8pt]
		\widehat{\mu_{2,v}^2}
		&= \frac{
			2\Big(\sum_{i=1}^n \frac{(T_i-1)(T_i^2-3T_i+3)}{T_i^2}\Big) \sum_{i}\sum_{t<t'}(\hat\eps_{it}^w)^2(\hat\eps_{it'}^w)^2 - \Big(\sum_{i=1}^n \frac{(T_i-1)(2T_i-3)}{T_i^2}\Big)  \sum_{i,t}(\hat\eps_{it}^w)^4
		}{
			\Big(\sum_{i=1}^n \frac{(T_i-1)(T_i^2-3T_i+3)}{T_i^2}\Big) \Big(\sum_{i=1}^n \frac{(T_i-1)(T_i^3-2T_i^2-3T_i+9)}{T_i^2}\Big) - 3\Big(\sum_{i=1}^n \frac{(T_i-1)(2T_i-3)}{T_i^2}\Big)^2
		}. \label{eq:mu2vsqhat_pool} \\
		\widehat\mu_{4,u}  
		&= \frac{n}{(n-1)(n^4+3n^3-26n^2+48n-24)} \Bigg[    \label{eq:mu4uhat_pool}\\
		&\quad\quad (n^3+6n^2-11n+9)\Bigg( \sum_{i}(\bar{\hat\eps}_i)^4 - 6\widehat\mu_{2,u}\widehat\mu_{2,v}\frac{(n-1)^2}{n^2}\sum_{i=1}^n\frac{1}{T_i} - \widehat\mu_{4,v}\frac{(n-1)(n^2-3n+3)}{n^3}\sum_{i=1}^n\frac{1}{T_i^3} \nonumber\\
		&\quad - 3\widehat{\mu_{2,v}^2}\left( \frac{(n-2)^2}{n^2}\sum_{i=1}^n\frac{1}{T_i^2} + \frac{2n-3}{n^3}\Big(\sum_{i=1}^n\frac{1}{T_i}\Big)^{\!2} - \frac{(n-1)(n^2-3n+3)}{n^3}\sum_{i=1}^n\frac{1}{T_i^3} \right) \Bigg) \nonumber\\
		&\quad - 3(2n-3)\Bigg( \sum_{i\neq i'}(\bar{\hat\eps}_i)^2(\bar{\hat\eps}_{i'})^2 - 2\widehat\mu_{2,u}\widehat\mu_{2,v}\frac{(n-1)\big((n-1)^2+2\big)}{n^2}\sum_{i=1}^n\frac{1}{T_i} - \widehat\mu_{4,v}\frac{(n-1)(2n-3)}{n^3}\sum_{i=1}^n\frac{1}{T_i^3} \nonumber\\
		&\quad - \widehat{\mu_{2,v}^2}\left( \frac{n^3-2n^2-3n+9}{n^3}\Big(\sum_{i=1}^n\frac{1}{T_i}\Big)^{\!2} + \frac{(n-2)(6-n)}{n^2}\sum_{i=1}^n\frac{1}{T_i^2} - \frac{3(n-1)(2n-3)}{n^3}\sum_{i=1}^n\frac{1}{T_i^3} \right) \Bigg) \Bigg]. \nonumber
	\end{align}
}These formulas are obtained by replacing the conditional central moments in \eqref{eq:mu2v}--\eqref{eq:mu4v} and \eqref{eq:mu2u}--\eqref{eq:mu4u} with their unconditional counterparts and pooling across firms and time periods.

\subsection{Cross-Section} \label{ap:cs}

Consider the model \eqref{eq:modelerror1}--\eqref{eq:modelerror2} with cross-sectional data $\{y_i,x_i\}_{i=1}^n$:
\[
y_i = g(x_i) - u_i + v_i,\qquad u_i \geq 0.
\]

In addition to Assumption \ref{asn:v} ($E[v \mid x] = 0$ and $u\indep v \mid x$), assume that $v$ has zero conditional skewness, $\mu_{3v}(x) = 0$. Then the conditional central moment equations are
\begin{align*}
	\mu_{2,y}(x) &= \mu_{2,u}(x) + \mu_{2,v}(x), \quad
	\mu_{3,y}(x) = -\mu_{3,u}(x), \quad
	\mu_{4,y}(x) = \mu_{4,u}(x) + 6\mu_{2,u}(x)\mu_{2,v}(x) + \mu_{4,v}(x).
\end{align*}
Add the absolute first and third conditional centered moments \citep{parmeter2023alternative}:
\begin{align*}
	E\left[\left|y - E[y \mid x]\right| \mid x\right]
	&= E\left[\left|\,v - \bigl(u - E[u \mid x]\bigr)\right| \mid x\right], \\
	E\left[\left|y - E[y \mid x]\right|^3 \mid x\right]
	&= E\left[\left|v - \bigl(u - E[u \mid x]\bigr)\right|^3 \mid x\right].
\end{align*}

Estimation proceeds as follows. First, nonparametrically estimate $E[y \mid x]$ and obtain residuals. Second, use these residuals to estimate the conditional moments of $y$. Third, at each input value, specify parametric distributions for $u \mid x$ and $v \mid x$, estimate their parameters by method of moments subject to the near-frontier mass constraint, and compute $\widehat E[u \mid x]$. Finally, estimate the FSF by $\widehat g(x)=\widehat E[y\mid x]+\widehat E[u\mid x].$

\section{Additional Table for Monte Carlo Simulations} \label{ap:tables_figures}

\begin{table}[!htb]
	\centering
	\footnotesize
	\setlength{\tabcolsep}{3.0pt}
	\begin{threeparttable}
		\caption{Bind Share and Absolute Bias by Region}
		\label{table:bind_absbias}
		\begin{tabular}{@{}ll cccc cccc cccc@{}}
			\toprule
			&& \multicolumn{4}{c}{Bind Share} & \multicolumn{4}{c}{Median $|\text{Bias}|$} & \multicolumn{4}{c}{Mean $|\text{Bias}|$} \\
			\cmidrule(lr){3-6} \cmidrule(lr){7-10} \cmidrule(lr){11-14}
			$m_0$ & $c$ & High-NFM & Uni-R & Uni-L & Low-NFM & High-NFM & Uni-R & Uni-L & Low-NFM & High-NFM & Uni-R & Uni-L & Low-NFM \\
			\midrule
			\multicolumn{14}{@{}l}{\textit{Panel A: ($n=25$)}} \\
			- & $\infty$ & -- & -- & -- & -- & 0.094 & 0.210 & 0.311 & 1.806 & $>\!10^{14}$ & 0.483 & 3.955 & 5.185 \\ 
			1 & $1.0$ & 0.053 & 0.174 & 0.410 & 0.731 & 0.094 & 0.377 & 1.384 & 2.371 & 0.182 & 0.440 & 1.441 & 2.259 \\ 
			1 & $0.5$ & 0.114 & 0.307 & 0.504 & 0.813 & 0.094 & 0.424 & 1.468 & 2.421 & 0.192 & 0.484 & 1.519 & 2.309 \\ 
			\midrule
			\multicolumn{14}{@{}l}{\textit{Panel B: ($n=250$)}} \\
			- & $\infty$ & -- & -- & -- & -- & 0.039 & 0.240 & 0.933 & 1.178 & 0.087 & 0.290 & 23 & $>\!10^{63}$ \\ 
			1 & $1.0$ & 0.002 & 0.021 & 0.277 & 0.560 & 0.039 & 0.241 & 1.046 & 1.779 & 0.087 & 0.294 & 1.110 & 1.774 \\ 
			1 & $0.5$ & 0.006 & 0.064 & 0.401 & 0.679 & 0.039 & 0.242 & 1.093 & 1.871 & 0.087 & 0.301 & 1.158 & 1.834 \\ 
			\midrule
			\multicolumn{14}{@{}l}{\textit{Panel C: ($n=2{,}500$)}} \\
			- & $\infty$ & -- & -- & -- & -- & 0.013 & 0.214 & 1.024 & 1.287 & 0.060 & 0.270 & 1.064 & $>\!10^{92}$ \\ 
			1 & $1.0$ & 0.000 & 0.002 & 0.185 & 0.466 & 0.013 & 0.214 & 1.026 & 1.483 & 0.060 & 0.270 & 1.073 & 1.557 \\ 
			1 & $0.5$ & 0.001 & 0.011 & 0.251 & 0.560 & 0.013 & 0.214 & 1.027 & 1.567 & 0.060 & 0.270 & 1.078 & 1.627 \\ 
			\bottomrule
		\end{tabular}
\begin{tablenotes}[para,flushleft]
	\hspace*{-0.4em} Notes:
	Regions are defined by shape parameters $(a,b)$:
	High-NFM, high near-frontier mass ($a<1$);
	Uni-R, unimodal right-skewed ($1 \le a \le b$);
	Uni-L, unimodal left-skewed ($1 \le b < a$);
	Low-NFM, low near-frontier mass ($a \ge 1,\, b<1$).
	For each grid point, statistics are first computed across the 150 Monte Carlo replications.
	The reported entries then aggregate these grid-point statistics across all grid points in the region.
	Bind Share is the mean across grid points of the fraction of replications in which the near-frontier mass constraint binds.
	Median $|\text{Bias}|$ is the median across grid points of the absolute bias of $\widehat{E}[u]$.
	Mean $|\text{Bias}|$ is the mean across grid points of the absolute bias of $\widehat{E}[u]$.
\end{tablenotes}
	\end{threeparttable}
\end{table}

\end{document}